
\documentclass[twocolumn,10pt]{asme2ej}

\usepackage{amsmath,amsthm,graphicx,amsfonts,amssymb,epsfig,mathrsfs,mathtools,esvect,gensymb, derivative,multirow,tabularx}
\usepackage[caption=false,font=footnotesize]{subfig}
\usepackage{comment}
\usepackage[linesnumbered, ruled, vlined]{algorithm2e}
\usepackage{placeins}
\usepackage{xcolor}
\usepackage[bookmarks=true]{hyperref}
\usepackage{xfrac}
\usepackage{xurl}
\usepackage{wrapfig}
\usepackage{arydshln}
\usepackage{blkarray}
\usepackage{lipsum}
\usepackage{dblfloatfix}
\usepackage{tikz}
\usepackage{color,soul}
\usepackage[normalem]{ulem}
\usepackage{cancel}
\date{}

%

\newcommand{\xx}{\mathsf{x}}
\newcommand{\y}{\mathsf{y}}
\newcommand{\s}{\mathcal{S}}
\newcommand{\w}{\mathcal{W}}
\newcommand{\D}{\mathcal{D}}
\newcommand{\norm}[1]{\left\|{#1}\right\|}

{}
{}
\newtheorem{remark}{Remark}{}
\newtheorem{example}{Example}{}
{}
{}
\newtheorem{proposition}{Proposition}{}
\newtheorem{theorem}{Theorem}{}

\DeclareMathOperator*{\argmin}{arg\,min}
\usepackage{graphicx}      
\usepackage{cancel}
\usepackage{booktabs} 
\usepackage{hyperref}   
\hypersetup{
	colorlinks=true,
	linkcolor=blue,
	citecolor=blue,
	urlcolor=blue,
}
\usepackage[square,numbers]{natbib}

%

\title{On the Convergence of Density-Based Predictive Control for Multi-Agent Non-Uniform Area Coverage}

\author{Sungjun Seo\\
    \affiliation{
        PhD Candidate\\
	Department of Mechanical Engineering\\
	New Mexico Institute of Mining and Technology\\
	Socorro, New Mexico 87801\\
    Email: sungjun.seo@student.nmt.edu
    }	
}

\author{Kooktae Lee\thanks{Corresponding author. This paper has been accepted for publication in the ASME Journal of Dynamic Systems, Measurement, and Control (JDSMC).  © 2025 by the American Society of Mechanical Engineers (ASME).} \\
    \affiliation{Associate Professor}\\
	Department of Mechanical Engineering\\
	New Mexico Institute of Mining and Technology\\
	Socorro, New Mexico 87801\\
    Email: kooktae.lee@nmt.edu
}

\begin{document}

\maketitle    

\begin{abstract}
This paper investigates the convergence conditions of Density-based Predictive Control (DPC) for non-uniform area coverage. In large-scale real-world scenarios, such as search and rescue or environmental monitoring missions, efficient non-uniform multi-agent area coverage is essential, as uniform coverage fails to account for varying regional priorities and operational constraints. To address this, we propose a novel multi-agent density-based predictive control strategy, DPC, grounded in optimal transport (OT) theory. Given a pre-constructed reference distribution representing priority regions, DPC ensures that agents allocate their coverage efforts by spending more time in high-priority or densely sampled areas, achieving effective non-uniform coverage. We analyze the convergence conditions of DPC by formulating the contraction mapping problem in terms of the Wasserstein distance. Additionally, we derive the analytic optimal control law for the unconstrained case and propose a numerical optimization method for determining the optimal control law under input constraints. Comprehensive simulations were conducted on both first-order dynamic systems and a linearized quadrotor model under constrained and unconstrained conditions. The results demonstrate that when the proposed conditions are satisfied, the Wasserstein distance locally converges, and the agent trajectories closely match the non-uniform reference distribution. Furthermore, the comparison with the existing coverage method demonstrated the superiority of the DPC method in non-uniform area coverage.
\end{abstract}

\begin{nomenclature}
\entry{$\mathbb{R}$}{Set of real numbers}
\entry{$\mathbb{N}$}{Set of natural numbers}
\entry{$\mathbb{R}_{\geq 0}$}{Set of nonnegative real numbers}
\entry{$\mathbb{N}_{n:m}$}{Set of natural numbers in the interval $[n, m]$}
\entry{$\mathbb{N}_0$}{Natural numbers including zero, i.e., $\mathbb{N} \cup \{0\}$}
\entry{$\mathbb{R}^n$}{$n$-dimensional real vector space}
\entry{$\mathbb{R}^{n \times m}$}{$n \times m$ real matrix space}
\entry{$A^\top$}{Transpose of vector or matrix $A$}
\entry{$\norm{x}^2_B$}{Quadratic form $x^\top B x$, where $x$ is a vector and $B$ is a square matrix}
\entry{$C^+$}{Moore-Penrose pseudoinverse of matrix $C$}
\entry{$I_n$}{$n \times n$ identity matrix}
\end{nomenclature}

\section{Introduction}

Non-uniform area coverage is critical in various real-world applications where certain regions must be prioritized due to operational constraints or varying levels of importance. For example, in search and rescue (SAR) missions, areas with a higher likelihood of containing survivors or hazardous conditions must be given priority to maximize both efficiency and effectiveness \cite{murphy2014disaster}.
Similarly, in environmental monitoring, regions that are ecologically sensitive or have high pollution levels require increased coverage to ensure accurate data collection and timely intervention \cite{govea2024integration, marjovi2009multi}. These examples highlight the need for strategies that selectively allocate resources based on the varying importance of different regions.

To effectively implement such strategies, multi-agent systems have emerged as a promising solution. By leveraging decentralized coordination and parallelism, multi-agent systems are well-suited for efficiently carrying out area coverage tasks. These systems can significantly improve coverage performance, particularly in large or complex environments, where traditional uniform coverage strategies may fall short. However, non-uniform area coverage presents additional challenges, requiring agents to achieve the desired coverage distribution by prioritizing certain regions while efficiently allocating their limited resources.

A variety of approaches have been proposed to address area coverage problems using multi-agent systems. Traditional strategies primarily focus on uniform coverage, where agents are evenly distributed across the area to be covered. Grid-based methods, such as those outlined in \cite{arkin1993lawnmower} and \cite{xu2014efficient}, ensure that every region is visited at least once, optimizing coverage efficiency in structured environments. However, these methods are less effective in adapting to spatially varying coverage demands, particularly in non-uniform scenarios.

To overcome these limitations, recent research has increasingly focused on non-uniform area coverage strategies. Spectral Multiscale Coverage (SMC) \cite{GM-IM:11} applies ergodic principles and approximates the target density using Fourier basis functions. While SMC effectively guides agents toward high-priority regions, it is centrally developed and requires truncation of the infinite Fourier series, leading to practical approximation errors and scalability limitations \cite{silverman2013optimal, lee2018receding}. Voronoi-based coverage control \cite{schwager2008consensus, schwager2015robust} redistributes agents’ final positions according to a given density, but it does not consider the time-averaged behavior of agents, limiting its applicability to persistent coverage tasks. Optimal Transport (OT)-based methods \cite{kabir2020receding, kabir2021wildlife, kabir2021efficient, lee2022density} align agent trajectories with a reference distribution; however, most prior work ignores agent dynamics and provides no formal guarantees of optimality. 
Although subsequent studies \cite{seo2025tcst, seo2025tsmcs} have extended OT-based approaches to account for agent dynamics through an optimal control framework, they do not address stability analysis of the controlled trajectories.
More recently, reinforcement learning (RL)-based strategies \cite{meng2021deep} and decentralized optimization techniques \cite{yang2020distributed} have been explored to enable adaptive coverage behaviors under dynamic constraints, yet RL methods often require extensive training and may lack interpretability, while decentralized optimization can face convergence and stability challenges.

It is also important to distinguish Traveling Salesman Problem (TSP)-based approaches \cite{kalmar2017multiagent, guo2024imtsp}, which focus on planning efficient tours through discrete waypoints. These methods are effective for structured path planning problems involving finite sets of targets. However, they do not address spatially continuous or density-aware coverage tasks, and therefore differ in both objective and applicability from uniform or non-uniform area coverage methods.

Although the aforementioned approaches have advanced non-uniform coverage, many still do not fully address critical aspects of optimality in realistic multi-agent scenarios. Practical limitations, such as the number of agents, operational time, and energy constraints, are often overlooked. In addition, theoretical guarantees regarding convergence and stability are frequently absent, raising concerns about robustness in real-world deployments.

To address these shortcomings, we propose a novel control scheme, Density-based Predictive Control (DPC), for non-uniform area coverage in multi-agent systems. Grounded in OT theory, DPC incorporates agent dynamics and guides trajectories to align with a reference distribution representing priority regions, concentrating efforts on high-priority areas while addressing the practical and theoretical limitations of existing methods. The main contributions of this paper are summarized as follows:

\begin{itemize}
\item[(i)] We propose a control scheme that enables multiple agents to achieve collaborative, non-uniform area coverage while accounting for practical constraints such as battery limitations, agent availability, and communication range.

\item[(ii)] The proposed DPC method performs spatially continuous coverage over the task space, unlike existing OT-based methods that rely on discrete waypoint selection from sample points for path planning.

\item[(iii)] In contrast to prior OT-based approaches, which do not derive an explicit control law and ignore agent dynamics during trajectory generation, our method formulates and solves an optimal control problem that incorporates both unconstrained and constrained agent dynamics to generate dynamically feasible control inputs.

\item[(iv)] We establish a convergence condition for the derived control input using the local Wasserstein distance, ensuring provable and reliable area coverage performance.

\item[(v)] The effectiveness of the proposed method is validated through simulations on both a first-order system and a linearized quadrotor model, where the resulting agent trajectories closely match the reference distribution while satisfying the derived convergence conditions.
\end{itemize}

\vspace{-0.1in}
\section{Problem description}
\subsection{Area Coverage Problem}

In many practical applications, achieving spatial coverage of a designated area is crucial for desired performance. In this study, the main goal is to deploy a set of agents so that the area of interest is covered as efficiently and effectively as possible.

\textbf{Illustrative Application: Victim Detection in Search and Rescue Missions}

In the aftermath of large-scale natural disasters, rapid and efficient victim detection is critical to minimizing casualties and ensuring timely medical assistance. For instance, following the 2011 Tōhoku earthquake and tsunami in Japan, widespread destruction left thousands trapped under debris, with emergency responders struggling to locate survivors across vast and hazardous terrain. A well-coordinated victim detection mission is essential, as rapid identification of survivors significantly improves the chances of a successful rescue. 

A conventional approach to such missions involves uniformly scanning the affected region. However, this method is highly inefficient due to the scale of the disaster area and the limited availability of search resources, often resulting in a superficial search that does not prioritize high-risk zones. This approach fails to account for the spatial distribution of likely survivor locations, leading to inefficient search efforts and wasted time.

A more effective victim detection strategy incorporates probabilistic information to guide search operations. Factors such as building collapse density, historical disaster data, real-time aerial imagery, and distress signals from mobile devices can be integrated to construct a probabilistic map of likely victim locations. As illustrated in Fig.~\ref{fig: concept}(a), a set of discretized reference points can be pre-constructed to represent the likelihood of victim presence using these diverse information sources.
\begin{figure}[h]
    \centering
    \subfloat[Reference point distribution representing victim likelihood]{
    \includegraphics[width=0.38\linewidth]{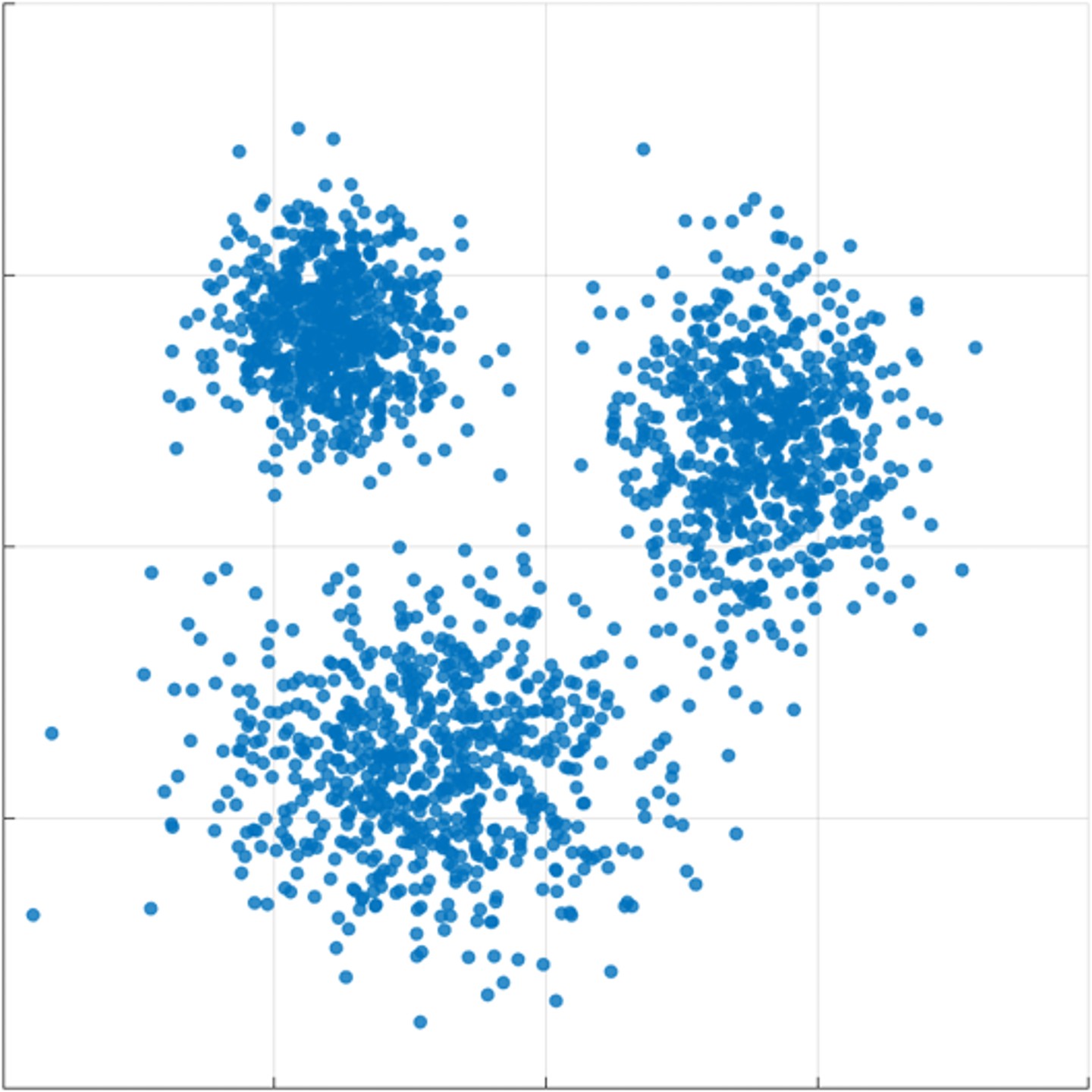}}\quad
    \subfloat[Efficient search coverage by the multi-agent system]{
    \includegraphics[width=0.38\linewidth]{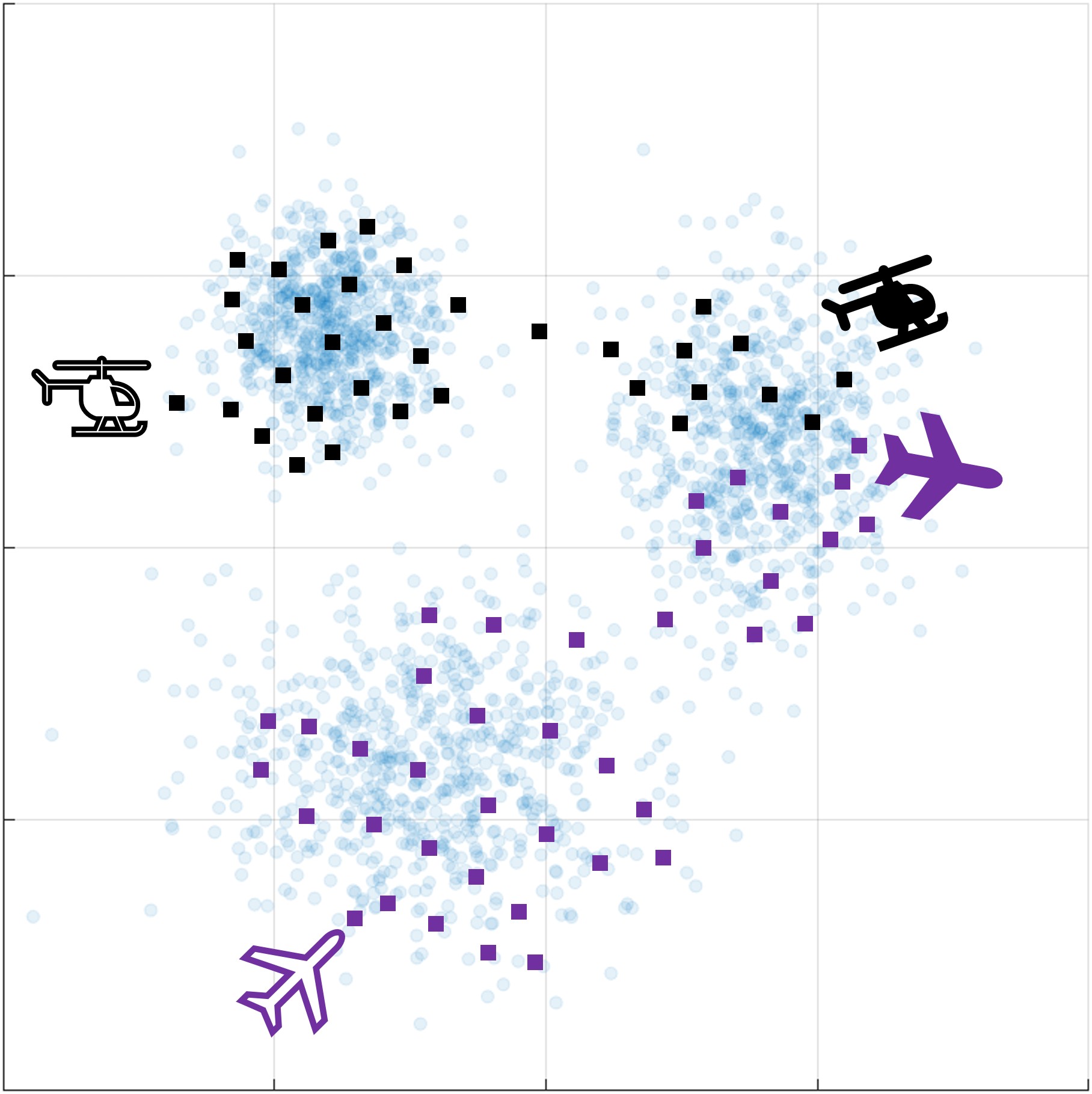}}
    \caption{Conceptual illustration of prioritized search coverage using a multi-agent system.}
    \label{fig: concept}
\end{figure}
These reference points indicate high-priority regions within the disaster zone. By directing search agents toward these areas, the detection mission becomes significantly more efficient compared to a uniform search strategy. Specifically, the objective is for agents to systematically and collaboratively cover the reference points, ensuring their trajectories (black and purple squares in Fig.~\ref{fig: concept}(b)) closely match the reference distribution (blue circles in Fig.~\ref{fig: concept}(a)). This approach optimally allocates resources, balancing constraints such as agent dynamics, control input limitations, and battery life, ultimately improving the effectiveness of victim detection efforts in large-scale search and rescue (SAR) missions.
To address this problem, optimal transport theory, which is explained in detail in the next section, is introduced.

\subsection{Preliminary: Optimal Transport Theory}
Optimal transport (OT) is a field focused on minimizing the cost of moving (or transporting) mass from one distribution to another \cite{monge1781memoire}, which can be mathematically explained as follows.

Consider two mass distributions: 
$\mu=\{(\y_i,\alpha_i)\ |\ i=1,2,...,M,\ \y_i\in\mathbb{R}^2,\ \alpha_i\in\mathbb{R}_{\geq0}\}$ and
$\nu = \{(q_j,\beta_j) \ |\ j=1,2,...,\ N, \ q_j\in\mathbb{R}^2,\ \beta_j\in\mathbb{R}_{\geq0}\}$, where $\y_i$ and $q_j$ represent the location of the points in each mass distribution, and $\alpha_i$ and $\beta_j$ denote their corresponding weight. The task is to move the mass in $\nu$ such that the resulting distribution matches $\mu$. The OT problem aims to determine the most efficient way to achieve this transportation and can be formulated as {\small
\begin{align}
&\min_{\gamma_{ij}\geq 0}\sum\nolimits_{i,j}\gamma_{ij}C(\y_{i},q_{j})\label{eqn: W_LP} 
		\\&\text{subject to} \  \left\{
  \begin{aligned}
  &\sum\nolimits_{i=1}^{M}\gamma_{ij} = \beta_{j}, \, \forall j, \quad \sum\nolimits_{j=1}^{N}\gamma_{ij} =\alpha_{i}, \, \forall i,\\
    &\sum\nolimits_{i=1}^{M}\alpha_i = \sum\nolimits_{j=1}^{N}\beta_j = 1,
    \end{aligned}
 \right.\nonumber
\end{align}}
where $\gamma_{ij}$ is the decision variable representing the amount of mass transported from the point $q_{j}$ to $\y_{i}$, and $C(\y_{i},q_{j})$ is the cost for the transportation.

The first and second constraints in \eqref{eqn: W_LP} imply that all the mass $\beta_j$ must be fully transported from the point $q_j$ without any remainder, and the transported mass should accumulate to form $\alpha_i$ at $\y_i$. The last constraint ensures that the total mass of distribution $\mu$ equals the total mass of distribution $\nu$ due to mass conservation, meaning that mass cannot be created or lost during transportation. For probabilistic distributions, total mass (or probability) must equal one. 

A $p$-th power of the Euclidean distance can be used for the cost function $C(\y_{i},q_{j})$, leading to the $p$-Wasserstein distance \cite{villani2009optimal} as follows:
{\small
\begin{equation}\label{eqn: W_LP2}
\begin{aligned}
& \mathcal{W}_p(\mu, \nu) =
 \min_{\gamma_{ij} \geq 0} (\sum\nolimits_{i,j}\gamma_{ij}\lVert \y_i - q_j \rVert^p)^{1/p}
\end{aligned}
\end{equation}}
with the same constraints in \eqref{eqn: W_LP}.

While the case $p=1$ for $\mathcal{W}_{1}$ (commonly known as the Earth Mover’s Distance) could be considered, the 2-Wasserstein distance $\mathcal{W}_{2}$ is chosen due to its favorable mathematical properties, including its ability to capture second-order moment information, which is essential for the problem at hand. For clarity, the notation $\mathcal{W}$ will be used hereafter to refer specifically to the 2-Wasserstein distance.

\subsection{Problem-Solving Strategy}
To achieve the similarity between the distribution of agent trajectories and the distribution of reference points, OT theory is employed. 
In our problem setup, two mass distributions are defined over a two-dimensional space: the distribution of agent trajectories and the distribution of reference points.

\textbf{The distribution of reference points, $\boldsymbol{\nu}$.} The distribution of reference points is defined as $\nu = \{(q_j,\beta_j) \ |\ j=1,2,...,N,\ q_{j}\in \mathbb{R}^2,\ \beta_j\in\mathbb{R}_{\geq 0}\}$. This distribution corresponds to the reference point distribution shown in Fig. \ref{fig: concept}(a). The symbol $q_{j}$ represents the location of sample point $j$ in the distribution, and $\beta_j$ denotes its corresponding mass. The mass  $\beta_j$ must sum to 1 over all $j$, satisfying the constraint of the OT theory in \eqref{eqn: W_LP}. The sample point distribution $\nu$ is assumed to be given in advance, constructed from multiple pre-acquired sources. If the source is given as a continuous probability density function, it can be approximated as a discrete mass distribution using techniques such as a random sampling method.

\textbf{The distribution of agent trajectories, $\boldsymbol{\mu}$.} The distribution of agent trajectories is defined as $\mu=\{({}^r\y^{k},{}^r\alpha^{k})\ |\ r=1,2,..,L,\ k=1,2,...,{}^r M,\  {}^r\y^{k}\in\mathbb{R}^2,\ {}^r\alpha^{k}\in\mathbb{R}_{\geq 0}\}$. This distribution corresponds to the set of agent trajectories shown in Fig. \ref{fig: concept}(b). Similar to the distribution of reference points $\nu$, the symbol ${}^r\y^{k}$ denotes the location of agent $r$ at discrete time $k$, which ranges from the initial time, $k=1$, to the estimated (nominal or designed) operation time of agent $r$, ${}^r M$, potentially determined by its battery life or user settings. This is analogous to $\{\y_i\}$ in the previous section. Correspondingly, the mass of this point is assigned as ${}^r\alpha^k$. The mass ${}^r\alpha^k$ is assigned based on the energy consumption of agent $r$ at time $k$. In this paper, agents are assumed to consume the same amount of energy at every time instance, resulting in equal mass, ${}^r\alpha^k = 1/(\sum_r {}^r M)$, satisfying $\sum_r \sum_k {}^r\alpha^k = 1$, as given in \eqref{eqn: W_LP}. 

For clarity, points in $\mu$ and $\nu$ will be referred to as \textit{agent-point} and \textit{sample-point}, respectively, for the remainder of this paper. Furthermore, the term \textit{weight} indicates the mass of each agent-point and sample-point.
\begin{remark}[Notation Conventions for Variable Indices]
    Throughout this paper, $j$ will be used to index the sample-points, $r$ will be used to index the agents, and $k$ will be used to index the discrete-time steps. Furthermore, each index has a specific position in the notation: the index $r$ appears as a left superscript, $j$ as a right subscript, and $k$ as a right superscript.
\end{remark}

Based on the defined notations, the optimal area coverage is achieved by finding the control inputs that minimize the 2-Wasserstein distance between $\mu$ and $\nu$, which is formulated by
{\small
\begin{align}
    &\mathbf{u}^* = \argmin\nolimits_{\mathbf{u}} \w(\mu,\nu)\label{eqn: general formulation}\\
    &\text{subject to} \  {}^r\y^{k+1} = f({}^r\y^{k}, {}^r u^{k}),\,C_u\,{}^ru^k \leq D_u,\,\forall k \in \mathbb{N}_{1:{}^r M}\nonumber
\end{align}}
where $\mathbf{u}=\{{}^ru^k\ |\ r=1,2,...,L,\ k = 1,2,...,{}^rM\}$ denotes a set of control inputs, and $\mathbf{u}^*$ represents a set of optimal control inputs. The first constraint represents agent dynamics, while the second constraint specifies the input threshold. Additionally, the terminal time is limited by the estimated operation time. These constraints reflect the practical limitations.

This minimization problem involves high-dimensional decision variables and highly complex constraints, making it computationally demanding and requiring a nonlinear solver, which may be impractical. To mitigate these challenges, instead of considering all agent-points and sample-points simultaneously, a subset of sample-points—referred to as local sample-points—and a single agent-point are considered at each discrete-time step. A suboptimal solution is then computed to minimize the Wasserstein distance between them. 
In this case, the progress of area coverage is updated by transporting some weights of sample-points to an agent-point at each time step, leading to a time-varying weight for each sample-point, denoted as $\beta^k_j$.  

In a multi-agent system, agents may cover different areas, and sample-points may transfer their weights to different agents. Collaborative coverage is achieved through weight sharing between agents; however, before communication occurs, each agent processes the weights of sample-points independently. To indicate this, the weight of the sample-point maintained by agent $r$ is denoted as ${}^r\beta^k_j$ throughout this paper.

In particular, this paper is based on the following considerations:
\begin{itemize}
    \item[\textbullet] Although the general formulation in (3) allows for nonlinear systems, this paper focuses on linear time-invariant (LTI) systems.
    \item[\textbullet] 
    While the proposed method supports a decentralized communication setting, the simulations in this study assume all-to-all communication in order to focus solely on evaluating the controller's coverage performance.
\end{itemize}

\section{Density-based Predictive Control (DPC)}
To address the area coverage problem described in the previous section, Density-based Predictive Control (DPC) is proposed. The DPC scheme consists of three stages: the optimal control stage, the weight update stage, and the weight-sharing stage. These three stages alternate at each time step until the entire weight in the sample-point distribution $\nu$ is transported to the agent-point distribution $\mu$. In the optimal control stage, a subset of sample-points from the entire set is selected as the local sample-points. Based on these local sample-points, the control input is computed to minimize the cost function associated with the Wasserstein distance between local sample-points and a future agent-point. After the control input is applied and the agent moves to a new location at the next time step, the weight of the sample-points surrounding the agent is transported to the agent-point during the weight update stage. As a result, the weight of those sample-points decreases. This process reflects areas visited by updating the weight of the sample-points within those areas. Finally, agents share the weight of sample-points they hold with each other during the weight-sharing stage. The system description and each stage of DPC are explained in detail.

\subsection{General Linear Time-Invariant (LTI) System Description}
Consider the dynamics of an LTI system for agent $r$ as follows:
{\small    \begin{align}
\begin{aligned}
      &{}^r\mathsf{x}^{k+1} = {}^r A\ {}^r\xx^{k} + {}^r B\ {}^ru^k,\ {}^r\y^{k} = {}^r C\ {}^r\xx^k,  
\end{aligned}\label{eqn: LTI system}
\end{align}}
where ${}^r\mathsf{x}^k\in\mathbb{R}^{n}$ is a state vector, ${}^ru^k\in\mathbb{R}^{m}$ is an input vector, and ${}^r\y^k\in\mathbb{R}^{p}$ is an output vector, indicating the Cartesian spatial coordinates of the agent. The system matrices are given by ${}^r A\in\mathbb{R}^{n\times n}$, ${}^r B\in\mathbb{R}^{n\times m}$ and ${}^r C\in\mathbb{R}^{p\times n}$ , where the matrices ${}^r B$ and ${}^r C$ have ranks $m$ and $p$, respectively. The index $r$ in the system matrices signifies that this controller is designed for a heterogeneous multi-agent system. Since the proposed control strategy applies to each individual agent within a fully decentralized framework, the left superscript \( r \), which indicates the agent index, will be omitted hereafter. Collaborative area coverage and weight sharing among agents are discussed in Section 3.4.

\begin{remark}
For the LTI system given in \eqref{eqn: LTI system}, there exists an output relative degree $P\in \mathbb{N}$ such that the term $CA^{P-1}B$ is the first non-zero matrix in the sequence starting from the matrix $CA^{0}B$. In other words, $CA^{0}B = CA^{1}B =\cdots= CA^{P-2}B = \mathbf{0}$.
\label{Assumption: cancellation}
\end{remark}
\begin{example}\label{example1}
    Cancellations in the product of system matrices, such as $CA^{0}B = CA^{1}B =\cdots= CA^{P-2}B = \mathbf{0}$, frequently occur in real-world systems. For instance, a simple system $\ddot{x}=u$ is approximated to the discrete-time state-space equation using Euler's method by 
    {\small \begin{align}
        \underbrace{\begin{bmatrix}
            {x}^{k+1}\\ {v}^{k+1}
        \end{bmatrix}}_{\xx^{k+1}} = \underbrace{\begin{bmatrix}
            1& \Delta T\\0 &1
        \end{bmatrix}}_{A}
        \underbrace{\begin{bmatrix}
            x^{k}\\ v^{k}
        \end{bmatrix}}_{\xx^{k}}+
        \underbrace{\begin{bmatrix}
            0\\ \Delta T
        \end{bmatrix}}_{B} u^k,\quad \y^k=\underbrace{\begin{bmatrix}
            1&0
        \end{bmatrix}}_{C}\underbrace{\begin{bmatrix}
            x^{k}\\ v^{k}
        \end{bmatrix}}_{\xx^k}.\nonumber
    \end{align}}
    In this system, $CB$ is zero, indicating the control input $u^k$ does not directly affect $\y^{k+1}$ since $\y^{k+1} = C\xx^{k+1} = CA\xx^{k}+\cancelto{0}{CB}u^{k}$. However, $CAB$ is a nonzero matrix, indicating the control input $u^k$ does affect $\y^{k+2}$. Therefore, the output relative degree $P$ of this system is 2. Similarly, for a discrete-time state-space equation of  ${x}^{(4)}=u$, the matrix products $CA^2B=CAB=CB=0$, whereby $P$ is 4. The value $P$ represents the discrete-time interval after which $u^k$ first affects the system's output.
\end{example}
The output relative degree $P$ is utilized to formulate the optimal control input of the DPC scheme. In what follows, we explain each stage of the DPC scheme.

\subsection{Stage A: Optimal Control Stage}
The control scheme aims to achieve similarity between the sample-point distribution and the agent-point distribution by minimizing the Wasserstein distance between the local sample-points and the agent's estimated location at $P$ time steps ahead at each time step. The agent's estimated location at $P$ time steps ahead is considered since the current control input first affects the output at that time, as mentioned in Remark \ref{Assumption: cancellation} and Example \ref{example1}. Therefore, the control input is calculated based on the predictive ($P$ time steps ahead) Wasserstein distance.     

\textbf{Local sample-points. }
Here, we first describe the selection process of local sample-points at each time step. At any given time, the current agent-point possesses a pre-assigned weight, $\alpha^k$, as previously explained. According to OT theory, a certain weight from multiple sample-points must be transported to this agent-point while ensuring that the total transported weight matches the weight of the agent-point. This transported weight from the local sample-points is denoted as $\bar{\beta}_j^k$, which may differ from the current weight of the sample-points, $\beta_j^k$.

The key challenge lies in determining both the amount of weight to be transported and the specific sample-points (with non-zero weights) from which it should be transported to the current agent-point. To address this, we introduce the weight-normalized Euclidean (wnE) distance at time $k$, defined as: 
{\small
\begin{align}
    d_{\text{wnE},j}^k = \norm{q_{j}-\bar{q}^{k-1}}/{\beta^k_{j}}, \ \forall j:\ \beta_j^k > 0,\label{eqn:priority_index}
\end{align}}
where $\bar{q}^{k-1}$ represents the mass center of the local sample-points from the previous time step, computed as their weighted average. Specifically, for $k=1$, it coincides with the initial agent-point, i.e., $\bar{q}^{0} = \y^0$, while for $k>1$, it is determined using the set $\s^{k-1}$, which consists of the local sample-points at time $k-1$, as  
{\small\begin{align}
    \bar{q}^{k-1} = (\sum\nolimits_{q_j\in \s^{k-1}} \bar{\beta}^{k-1}_j  q_j)/(\sum\nolimits_{q_j\in \s^{k-1}} \bar{\beta}^{k-1}_j).\label{eqn: mass_center_LSP}
\end{align}}
\noindent In this way, the mass center, $\bar{q}^{k-1}$, is predetermined at each time $k$. Then, the selection of new local sample-points follows an ascending order of $d_{\text{wnE}}$ until the sum of the selected local sample-points' weights matches the weight of the agent-point, i.e., $\sum\nolimits_{j} \bar{\beta}^{k}_j = \alpha^{k}$. This weight condition is essential to preserve mass conservation in OT theory.

\begin{remark}[Interpretation of the mass center, $\bar{q}^{k}$]
    The weight-normalized Euclidean distance \eqref{eqn:priority_index} serves as the selection metric for local sample-points. In this metric, $\bar{q}^{k}$ turns out to be the desired location of the agent-point at time $k+P$, as described in the following discussion. 
    After the local sample-points are selected, the objective of this controller is to minimize the Wasserstein distance between the local sample-points at the current time $k$ and the agent-point at $k+P$, formulated as
    {\small\begin{align*}
        &\min_{\gamma^{k}_j,\  \y^{k+P}} \sum\nolimits_{q_j\in \s^{k}} \gamma^{k}_j\norm{\y^{k+P}- q_j}^2,
        \\& \text{subject to} \quad \gamma_j^{k} = \bar{\beta}_j^{k},\ \forall j,\ \sum\nolimits_{q_j\in\s^{k}} \gamma_j^{k} = \alpha^{k},
    \end{align*}}
    where $\gamma^{k}_{j}$ represents the transportation plan. The first constraint is obtained by substituting $M=1$ into the constraint given in \eqref{eqn: W_LP}. 
    Then the optimizer of $\y^{k+P}$ in this minimization problem has a closed-form solution, given by $\bar{q}^{k}$. The derivation process is straightforward and omitted in this paper. This result implies that the mass center $\bar{q}^{k}$ serves as the desired location of the agent at time $k+P$. 
\end{remark}

Based on $\s^k$, the predictive Wasserstein distance $(\w^{k+i|k})^2$ between $\s^k$ and the agent's predicted location at $i$ steps ahead is calculated by
{\small\begin{align}
   (\mathcal{W}^{k+i|k})^2 = \sum\nolimits_{q_j\in \s^{k}} \bar{\beta}^k_j \norm{\y^{k+i}-q_j} ^2 .\label{eqn: predictive WD_1}
\end{align}}
According to Remark \ref{Assumption: cancellation}, there is a value $P$, after which the agent's location is first affected by the current control input. Since the Wasserstein distance is a function of location, the control input at time $k$ does not affect $\w^{k+1|k}$, $\cdots$, $\w^{k+P-1|k}$ but rather affects $\w^{k+P|k}$. Based on this analysis, a function for determining convergence in terms of the Wasserstein distance is formulated as 
{\small\begin{align}
    &\Delta \w^k = (\mathcal{W}^{k+P|k})^2 - (\mathcal{W}^{k|k})^2 \label{eqn: diff_Wass} 
    \\&= \sum\nolimits_{q_j\in \s^{k}} \bar{\beta}^k_j \norm{\y^{k+P}-q_j} ^2 - \sum\nolimits_{q_j\in\s^{k}} \bar{\beta}^k_j \norm{\y^{k}-q_j}^2.\nonumber
\end{align}}
The convergence condition of the Wasserstein distance is to show that $\w^{k+P}$ decreases compared to $\w^{k}$ (i.e., $\Delta \w^k$ is negative).
The function $\Delta \w^k$ is simplified as stated in the following proposition.
\begin{proposition}[Simplification of $\Delta \w^k$]\label{prop: simplification of Dwk}
    The function $\Delta \w^k$ in \eqref{eqn: diff_Wass} is simplified by
    {\small\begin{align}
        &\Delta \w^k=\alpha^k\left(\norm{\y^{k+P}- \bar{q}^k}^2-\norm{\y^{k}-\bar{q}^k}^2\right). \label{eqn: diff_Wass_2}
    \end{align}}
\end{proposition}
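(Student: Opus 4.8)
The plan is to apply the weighted parallel-axis (König--Huygens / bias--variance) decomposition to each of the two quadratic sums appearing in \eqref{eqn: diff_Wass}, taking $\bar{q}^k$ as the common reference point, and then to observe that the resulting second-moment (``variance'') terms are identical and therefore cancel upon subtraction.

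First I would fix an arbitrary location $y\in\mathbb{R}^p$ and insert $\bar{q}^k$ inside the norm to write
{\small
\begin{align*}
\sum\nolimits_{q_j\in\s^{k}} \bar{\beta}_j^k \norm{y - q_j}^2 = \sum\nolimits_{q_j\in\s^{k}} \bar{\beta}_j^k \norm{(y-\bar{q}^k)+(\bar{q}^k-q_j)}^2 .
\end{align*}}
Expanding the squared norm produces three groups of terms: a term proportional to $\norm{y-\bar{q}^k}^2$, a cross term $2(y-\bar{q}^k)^\top \sum_{q_j\in\s^k}\bar{\beta}_j^k(\bar{q}^k-q_j)$, and a $y$-independent term $\sum_{q_j\in\s^k}\bar{\beta}_j^k\norm{\bar{q}^k-q_j}^2$.

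The crucial step is to show the cross term vanishes. Using the local-sample-point weight condition $\sum_{q_j\in\s^k}\bar{\beta}_j^k=\alpha^k$ together with the definition of the mass center in \eqref{eqn: mass_center_LSP}, which rearranges to $\sum_{q_j\in\s^k}\bar{\beta}_j^k\,q_j=\alpha^k\bar{q}^k$, I obtain $\sum_{q_j\in\s^k}\bar{\beta}_j^k(\bar{q}^k-q_j)=\alpha^k\bar{q}^k-\alpha^k\bar{q}^k=\mathbf{0}$. Hence, for every $y$,
{\small
\begin{align*}
\sum\nolimits_{q_j\in\s^{k}} \bar{\beta}_j^k \norm{y - q_j}^2 = \alpha^k\norm{y-\bar{q}^k}^2 + \sum\nolimits_{q_j\in\s^{k}} \bar{\beta}_j^k \norm{\bar{q}^k-q_j}^2 .
\end{align*}}

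Finally I would apply this identity twice, once with $y=\y^{k+P}$ and once with $y=\y^{k}$, and subtract as prescribed by \eqref{eqn: diff_Wass}. Since the second-moment term $\sum_{q_j\in\s^k}\bar{\beta}_j^k\norm{\bar{q}^k-q_j}^2$ is independent of $y$, it is common to both expansions and cancels, leaving exactly the claimed form \eqref{eqn: diff_Wass_2}. There is no genuine analytical obstacle here; the only point requiring care is to confirm that it is the transported weights $\bar{\beta}_j^k$ (and not the instantaneous weights $\beta_j^k$) that sum to $\alpha^k$ and that generate $\bar{q}^k$ via \eqref{eqn: mass_center_LSP}, since this is precisely the consistency that forces the cross term to collapse and the variance contribution to drop out.
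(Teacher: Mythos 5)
Your proposal is correct and follows essentially the same route as the paper's Appendix~A proof: both establish the weighted parallel-axis decomposition $\sum_{q_j\in\s^k}\bar{\beta}_j^k\norm{y-q_j}^2=\alpha^k\norm{y-\bar{q}^k}^2+\sum_{q_j\in\s^k}\bar{\beta}_j^k\norm{q_j-\bar{q}^k}^2$ using $\sum_j\bar{\beta}_j^k=\alpha^k$ and $\sum_j\bar{\beta}_j^k q_j=\alpha^k\bar{q}^k$ from \eqref{eqn: mass_center_LSP}, then evaluate at $\y^{k+P}$ and $\y^k$ and cancel the common second-moment term. The only cosmetic difference is that you insert $\bar{q}^k$ inside the norm so the cross term vanishes directly, whereas the paper expands around the origin and completes the square via its identity \eqref{eqn: simp_centroid} --- algebraically the same argument.
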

\begin{proof}
    The detailed proof is provided in Appendix A.
\end{proof}

Fig. \ref{fig: conv_cond} visually represents the equation \eqref{eqn: diff_Wass_2}. The orange dashed line represents the circle with the radius of $\norm{\y^{k}-\bar{q}^k}$, centered at $\bar{q}^k$. In the figure, the sign of $\Delta \w^k$ is determined based on the location of $\y^{k+P}$. If $\y^{k+P}$ is inside the circle, $\Delta \w^k$ is negative. If $\y^{k+P}$ lies on the circle, $\Delta \w^k$ equals zero. Otherwise, $\Delta \w^k$ becomes positive.
\begin{figure}[h]
    \centering
    \includegraphics[width=0.6\linewidth]{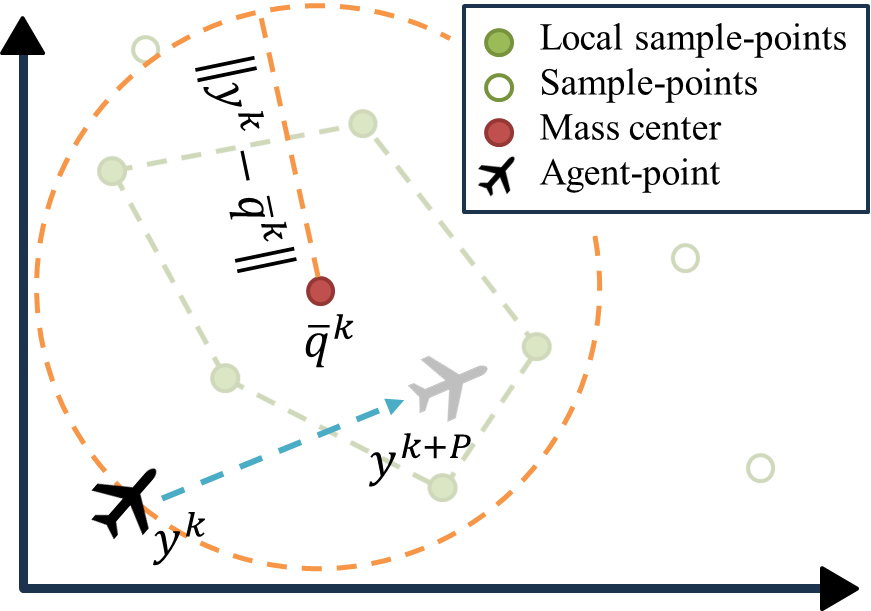}
    \caption{Description of the function $\Delta \w^k$.}
    \label{fig: conv_cond}
\end{figure}

In \eqref{eqn: diff_Wass_2}, by substituting \eqref{eqn: LTI system}, $\y^{k+P} = C\xx^{k+P} = CA^P\xx^k+CA^{P-1}Bu^k+\cdots+CBu^{k+P-1}$. Since $CA^{0}B = CA^{1}B =\cdots= CA^{P-2}B = \mathbf{0}$, as stated in Remark \ref{Assumption: cancellation}, the equation simplifies to $\y^{k+P} = CA^P\xx^k+CA^{P-1}Bu^k$. Substituting this $\y^{k+P}$ to \eqref{eqn: diff_Wass_2}, 
{\small\allowdisplaybreaks
\begin{align}
    &\Delta \w^k(u^k) = (u^k)^\top \D_1 u^k + 2\D_2 u^k + \D_3,
    \label{eqn: diff_Wass_2.2}
\\&\text{where}\quad    \D_1 = \alpha^k B^\top (A^{P-1})^\top C^\top C A^{P-1}B ,\nonumber
    \\&\phantom{\text{where}\quad}    \D_2 = \alpha^k\left\{(\xx^k)^\top (A^P)^\top C^\top-(\bar{q}^k)^\top \right\}C A^{P-1}B\nonumber ,
    \\&\phantom{\text{where}\quad}\D_3 = \alpha^k(\xx^k)^\top\left\{(A^P)^\top C^\top C A^P - C^\top C \right\}\xx^k \nonumber
    \\&\phantom{\text{where}\quad}\qquad \quad -2 \alpha^k(\bar{q}^k)^\top C (A^P-I_n)\xx^k. \label{eqn: D1 to D3}
\end{align}
}
Then the optimal control input $({u^*})^k$ and the convergence range are derived as follows.
\begin{theorem} [The convergence condition and the optimality of DPC -- Unconstrained case]\label{thm: 1}
    For the LTI system \eqref{eqn: LTI system}, if $CA^{P-1}B$ has a rank of $p\leq m$, where $m$ is the dimension of the control input $u^k$, then the optimal control inputs are a set-valued solution derived with an arbitrary vector $h\in\mathbb{R}^m$ by     {\small\begin{align}        ({u^*})^k = -\D_1^+\D_2^\top + (I_m-\D_1^+\D_1)h, \label{eqn: opt_cont_input}    \end{align}}    and the convergence range of DPC with respect to the control input $u^k$ is obtained by
    {\small
    \begin{align}    \norm{u^k+\D_1^+\D_2^\top}^2_{\D_1}<\D_2 \D_1^+ \D_2^\top-\D_3. \label{eqn: general_conv_range}    \end{align}}
    Specifically, if $CA^{P-1}B$ has a rank of $p = m$, then the unique optimal control input of DPC is given by{\small\begin{align}        ({u^*})^k = -\D_1^{-1}\D_2^\top, \label{eqn: opt_cont_input_unique}    \end{align}}    for which the convergence range of DPC with respect to the control input $u^k$ is obtained by     {\small\begin{align}    \norm{u^k+\D_1^{-1}\D_2^\top}^2_{\D_1}<\D_2 \D_1^{-1} \D_2^\top-\D_3. \label{eqn: conv_range}    \end{align} }
\end{theorem}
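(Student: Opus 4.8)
The plan is to regard $\Delta\w^k(u^k)$ from \eqref{eqn: diff_Wass_2.2} as a convex quadratic in $u^k$ and to exploit the shared factorization of $\D_1$ and $\D_2$ through $G:=CA^{P-1}B\in\mathbb{R}^{p\times m}$. Introducing $w:=CA^P\xx^k-\bar q^k$, the definitions in \eqref{eqn: D1 to D3} read $\D_1=\alpha^k G^\top G$ and $\D_2=\alpha^k w^\top G$, so that $\D_2^\top=\alpha^k G^\top w$. Since $\alpha^k>0$, the matrix $\D_1$ is symmetric positive semidefinite and $\Delta\w^k$ is therefore convex, with $\operatorname{rank}(\D_1)=\operatorname{rank}(G^\top G)=\operatorname{rank}(G)=p$ under the hypothesis $p\le m$.

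First I would derive the optimizer from the stationarity condition $\nabla_{u^k}\Delta\w^k=2\D_1u^k+2\D_2^\top=0$, i.e.\ $\D_1u^k=-\D_2^\top$. The key structural fact is that this normal equation is always consistent: because $\D_2^\top=\alpha^k G^\top w\in\operatorname{range}(G^\top)=\operatorname{range}(G^\top G)=\operatorname{range}(\D_1)$, a solution exists. Its complete solution set is the affine subspace $u^k=-\D_1^+\D_2^\top+(I_m-\D_1^+\D_1)h$ with $h\in\mathbb{R}^m$ arbitrary, where $I_m-\D_1^+\D_1$ projects onto $\ker\D_1$; by convexity every such point is a global minimizer, establishing \eqref{eqn: opt_cont_input}. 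When $p=m$ the matrix $\D_1$ is nonsingular, the null-space term vanishes, and $\D_1^+=\D_1^{-1}$, which gives the unique control \eqref{eqn: opt_cont_input_unique}.

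For the convergence range I would complete the square about the minimizer. Using the symmetry $\D_1^+=(\D_1^+)^\top$ and the identity $\D_1\D_1^+\D_2^\top=\D_2^\top$ — which holds precisely because $\D_2^\top\in\operatorname{range}(\D_1)$ and $\D_1\D_1^+$ is the orthogonal projector onto $\operatorname{range}(\D_1)$ — a direct computation yields the exact identity
{\small\begin{align*}
\Delta\w^k(u^k)=\norm{u^k+\D_1^+\D_2^\top}^2_{\D_1}-\D_2\D_1^+\D_2^\top+\D_3.
\end{align*}}
Hence $\Delta\w^k(u^k)<0$ is equivalent to \eqref{eqn: general_conv_range}, and replacing $\D_1^+$ by $\D_1^{-1}$ recovers \eqref{eqn: conv_range} in the full-rank case.

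The step I expect to be the main obstacle is the rank-deficient ($p<m$) analysis, where every clean statement rests on the range inclusion $\D_2^\top\in\operatorname{range}(\D_1)$: this is what makes the pseudoinverse expression an exact minimizer rather than a least-squares surrogate, and what turns the completed square into an equality rather than an inequality. I would also emphasize that in this case $\norm{\cdot}_{\D_1}$ is merely a seminorm, so the region \eqref{eqn: general_conv_range} is unbounded and invariant along $\ker\D_1$, consistent with the free directions $h$ appearing in the optimal control set.
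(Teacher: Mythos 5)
Your proposal is correct and takes essentially the same route as the paper's Appendix B: both reduce the problem to the normal equation $\D_1 u^k_o = -\D_2^\top$, establish its consistency through the shared factor $G = CA^{P-1}B$ (your identity $\operatorname{range}(G^\top)=\operatorname{range}(G^\top G)$ is exactly the content of the paper's rank/subspace argument), parameterize the full solution set with the Moore--Penrose pseudoinverse as $-\D_1^+\D_2^\top + (I_m-\D_1^+\D_1)h$, and obtain the convergence range from the completed square using $\D_1\D_1^+\D_2^\top = \D_2^\top$. Your write-up is marginally cleaner in organization---stationarity plus convexity in place of the paper's coefficient-matching ansatz $\norm{u^k-u^k_o}^2_{\D_1}+\D_4$, and a single completed square at $h=0$ with seminorm invariance along $\ker \D_1$ rather than the paper's explicit expansion showing the $h$-dependent cross terms vanish---but the mathematical content is identical.
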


\begin{proof}
The detailed proof is provided in Appendix B.    
\end{proof}

\begin{remark}[Physical interpretation of the convergence range]
In \eqref{eqn: general_conv_range}, since the matrix $\D_1$ is positive semidefinite, $\norm{u^k+\D_1^{-1}\D_2^\top}^2_{\D_1}\geq 0$. Thus, if $\D_2 \D_1^+ \D_2^\top-\D_3$ is negative, then no input $u^k$ exists satisfying \eqref{eqn: general_conv_range}, meaning that the convergence range is empty ($\emptyset$). On the other hand, if $\D_2 \D_1^+ \D_2^\top-\D_3$ is nonnegative, the convergence range consists of the union of the column space of $I_m-\D_1^+ \D_1$ and a subset of the column space of $\D_1$ that satisfies \eqref{eqn: general_conv_range}, forming a convex shape. For example, if the dimension of $\D_1$ is 2, the subset forms an ellipse. If the dimension of $\D_1$ is 3, then the subset forms an ellipsoid.
\end{remark}

If one chooses a control input $u^k$ satisfying the convergence condition \eqref{eqn: general_conv_range}, then it is guaranteed that the local Wasserstein distance, $\w^k$, denoted by the Wasserstein distance between the agent's position and $q_j\in\s^k$, converges in a piecewise manner.

\begin{remark}
The local Wasserstein distance is described as \textit{piecewise} convergent because, in \eqref{eqn: diff_Wass}, its difference is formulated over a horizon window of length \( P+1 \) using selected local sample-points. This implies that convergence is only ensured within this horizon and not beyond it. Moreover, the Wasserstein distance exhibits piecewise continuity, as new local sample-points are selected at each discrete-time step.
\end{remark}

The analytic optimal solution is given by \eqref{eqn: opt_cont_input} when the system is not subject to input constraints. For the constrained case, where the control input thresholds exist, such as polyhedral constraints $C_u u^k \leq D_u$, the optimal control input $({u^*})^k$ can be found numerically using quadratic programming, where the problem is given by
{\small\begin{align}
\begin{aligned}
    &({u^*})^k = \argmin\nolimits_{u^k} \ (u^k)^\top \D_1 u^k + 2\D_2 u^k + \D_3
    \\&\text{subject to} \quad C_uu^k \leq D_u.
    \end{aligned}\label{eqn: QP_opt_cont_input}
\end{align}}
The entire process of Stage A is summarized as the pseudocode shown in Algorithm \ref{Alg: StageA}.

\begin{algorithm}[!b]
    \caption{Stage A--Optimal Control Stage}
    \label{Alg: StageA}
    {
        \small
        \SetKwInOut{Input}{Input}
        \SetKwInOut{Output}{Output}
        \Input{
            $\alpha^k$, $\beta^k_j, q_j, \forall j=1,2,...,N$, $\bar{q}^{k-1}$, $\xx^k$, and system matrices $A$, $B$, $C$
        }

    $\s_{>0}
     \gets \{\, j\in \{1,\dots,N\} \mid \beta^k_j>0\,\}$
    
    $\alpha_\text{rem}\gets \alpha^k$

    $\s^{k}\gets \emptyset$
    
    \For{$i\in\s_{>0}$}{
    $d_{\text{wnE},i}^k \gets \norm{q_{i}-\bar{q}^{k-1}}/{\beta^k_{i}}\ $ \texttt{// Eq.} \eqref{eqn:priority_index}
    }

    $\s_{>0}$ $\gets$ Sort($\s_{>0}$) by $d_{\text{wnE},i}^k$ in ascending order

    \tcp{Local sample-points selection}
    \For{$i \in \s_{>0}$}{
      $\bar{\beta}^{k}_i \gets \min\!\big(\alpha_{\mathrm{rem}},\ {}^{r}\beta^{k}_i\big)$\\
      $\alpha_{\mathrm{rem}} \gets \alpha_{\mathrm{rem}} - \bar{\beta}^{k}_i$\\
      $\s^{k} \gets \s^{k} \cup \{(q_i, \bar{\beta}^{k}_i)\}$\\
      \If{$\alpha_{\mathrm{rem}} = 0$}{\textbf{break}}
    }

    \tcp{Optimal control input computation}
    $\bar{q}^{k} \gets (\sum\nolimits_{q_j\in \s^{k}} \bar{\beta}^{k}_j  q_j)/(\sum\nolimits_{q_j\in \s^{k}} \bar{\beta}^{k}_j)$ \texttt{// Eq.} \eqref{eqn: mass_center_LSP}

    $\D_1,\D_2,\D_3 \gets \text{computed as in \eqref{eqn: D1 to D3}}$
    
    \If{$u^k$ is unconstrained}{
    $({u^*})^k = -\D_1^+\D_2^\top + (I_m-\D_1^+\D_1)h$ 
    
    \qquad\qquad\qquad\qquad\qquad\qquad\qquad\texttt{// Eq.} \eqref{eqn: opt_cont_input}}
    \Else{$\begin{aligned}
    &({u^*})^k = \argmin\nolimits_{u^k} \ (u^k)^\top \D_1 u^k + 2\D_2 u^k + \D_3
    \\&\text{subject to} \quad C_uu^k \leq D_u\  \texttt{// QP; Eq.}\ \eqref{eqn: QP_opt_cont_input}
    \end{aligned}$} 
}
\end{algorithm}

Once the control input is found for DPC while satisfying the convergence condition, it can be applied to the agent to move to a new location, followed by Stage B for weight update of sample-points.

 \subsection{Stage B: Weight Update Stage}
After agent $r$ moves from the previous position ${}^r\y^k$ to the next position ${}^r\y^{k+1}$, the mass of sample-points around ${}^r \y^{k+1}$ should be transported to this agent-point to update the coverage progress while satisfying the mass conservation law described by the last constraint in \eqref{eqn: W_LP}. 
By this transportation, the mass of the sample-points decreases, which, in turn, leads to a lower weight-normalized Euclidean distance \eqref{eqn:priority_index} for the unvisited sample-points. This stage is conceptually illustrated in Fig. \ref{fig: concept_WeightUpdate}.
\begin{figure}[h]
    \centering
    \subfloat[Before the update]{
    \includegraphics[width=0.38\linewidth]{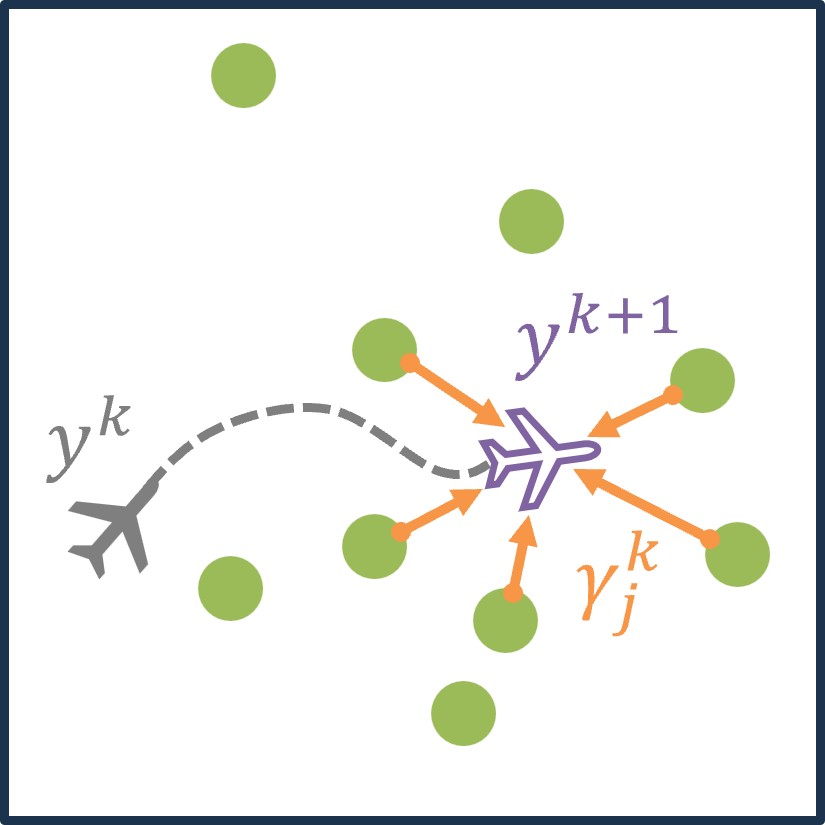}}\quad
    \subfloat[After the update]{
    \includegraphics[width=0.38\linewidth]{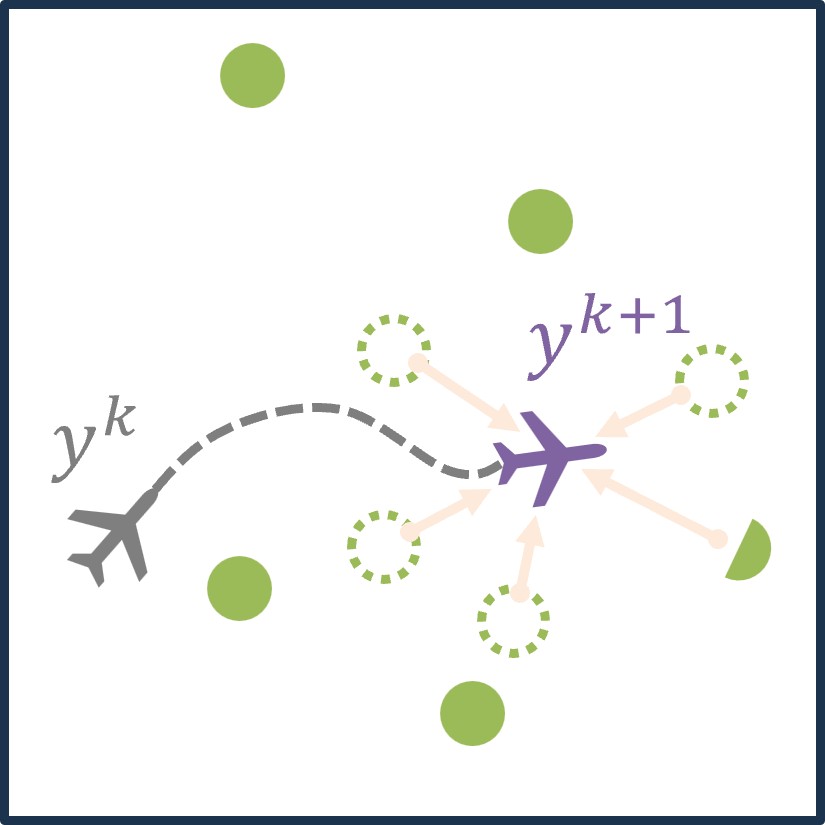}}
    \caption{Conceptual illustration of the weight update stage. Green circles represent the sample-points, while the dashed circles represent sample-points with no weight after transportation.}
    \label{fig: concept_WeightUpdate}
\end{figure}

The amount of the weight to be transported from each sample-point to the agent-point is determined by a minimization problem formulated by
{\small
\begin{align}
&{}^r ({\gamma}_{j}^*)^{k}=\argmin\nolimits_{{}^r \gamma_{j}^{k}} \sum\nolimits_{j}{}^r \gamma_{j}^{k}\lVert {}^r\y^{k+1} - q_j \rVert^2 \label{eqn: Sec3_WeightUpdate_minimization} \\
		&\text{subject to} \ \ 
  0 \leq {}^r{\gamma}^{k}_{j} \leq  {}^r{\beta}^k_{j},\ 
		\sum\nolimits_{j=1}^{N}{}^r{\gamma}^{k}_{j} = {}^r \alpha^{k+1}, \  \forall j,\nonumber
\end{align}}
where ${}^r \gamma_{j}^{k}$ is the transportation plan from sample-point $j$ at time $k$, and ${}^r \alpha^{k+1}$ is the weight of ${}^r \y^{k+1}$. The first constraint denotes that the transportation plan must be nonnegative and that each sample-point cannot transport more weight than it currently possesses. The second constraint is to obey the mass conservation law. Since \eqref{eqn: Sec3_WeightUpdate_minimization} is a linear programming (LP) problem, the solution can be easily obtained. Also, an analytic solution for this problem is well investigated in \cite{kabir2021wildlife}.
The weight of each sample-point held by agent $r$ is then updated by
{\small\begin{align}
	{}^r\beta^{k+1}_{j} = {}^r{\beta}^{k}_{j}  -  {}^r({\gamma}_{j}^*)^{k},\  \forall j.\label{eqn: weight update}
\end{align}}
\subsection{Stage C: Weight-Sharing Stage}

Stages A and B are executed independently by each agent under a decentralized framework, resulting in non-collaborative coverage as agents are unaware of areas already covered by others. To address this, Stage C introduces a weight-sharing mechanism where agents within communication range exchange coverage information to enable local coordination. 

Each agent maintains local weight data for sample points, denoted by ${}^r\beta^{k}_{j}$. When agents \( r \) and \( s \) are within range \( d_{\text{comm.}} \), they synchronize their weights using:
{\small
\begin{align}\label{eqn: D2C weight update}
    {{}^r\beta}^{k}_{j} =  {^s\beta}^{k}_{j} = \min({^r\beta}^{k}_{j},\,{^s\beta}^{k}_{j}),\quad \forall j.
\end{align}}
This local synchronization helps prevent redundant coverage by aligning agents’ understanding of which areas have already been visited. In contrast, the centralized case assumes local controllers with centralized communication, where each agent retains its autonomy but has access to globally synchronized weight information. This corresponds to the idealized case where \( d_{\text{comm.}} \to \infty \).

Thus, collaboration in the decentralized setup is achieved through local updates during agent encounters, supporting efficient area coverage. Additional details are available in \cite{kabir2021wildlife}.

\begin{remark}[Scalability of the DPC scheme]\label{Rem: Scalability}
    Since the proposed DPC scheme is designed for multi-agent systems operating over large areas, scalability is crucial for practical implementation. To this end, we have analyzed the theoretical runtime complexity of each stage. The symbols $L$ and $N$ are used to represent the number of agents and the number of sample-points, respectively.

        \textbf{Stage A:} Owing to the fully decentralized framework, each agent $r=1, ..., L$ executes this stage individually only using its local weight information, ${}^r\beta^k_j$ for $j=1,...,N$. The primary steps in this stage are the selection of local sample-points and the computation of the optimal control inputs, as shown in \eqref{eqn: opt_cont_input}, \eqref{eqn: opt_cont_input_unique}, and \eqref{eqn: QP_opt_cont_input}. Considering that the dimension of the input vector, $m$, is typically much smaller than $N$, local sample-points selection dominates the runtime of Stage A. Consequently, the primary bottleneck, namely the sorting of the sample-points with $O(N\log N)$ complexity, governs the overall computational cost of Stage A. 
        
        \textbf{Stage B:} Similar to Stage A, each agent individually updates its local weight information. Although the main computation in \eqref{eqn: Sec3_WeightUpdate_minimization} is formulated as a linear program, an analytic solution can be obtained by sorting the sample-points in ascending order of $\lVert {}^r\y^{k+1} - q_j \rVert^2$ \cite{kabir2021wildlife}. Thus, the theoretical complexity of this stage is dominated by $O(N\log N)$.
        
        \textbf{Stage C:} In case of all-to-all communication, the theoretical maximum communication overhead is computed as $O(L\times(L-1)\times N)$. However, the range-limited communication strategy in Stage C, together with the event-triggered communication strategy, can effectively reduce the communication overhead.     
\end{remark}

\section{Simulations}
Simulations are conducted to validate the proposed DPC scheme. In the simulations, the first-order dynamics and the linearized quadrotor dynamics are considered. The sample-point distribution used in the simulations, shown in Fig. \ref{fig: refer_dist}, is modeled as a mixture of Gaussian distributions. The sample-point representation in Fig. \ref{fig: refer_dist}(b) is obtained by sampling point clouds from this mixture.
\begin{figure}[h]
    \centering
    \subfloat[Contour map of the reference distribution]{
    \includegraphics[width=0.45\linewidth]{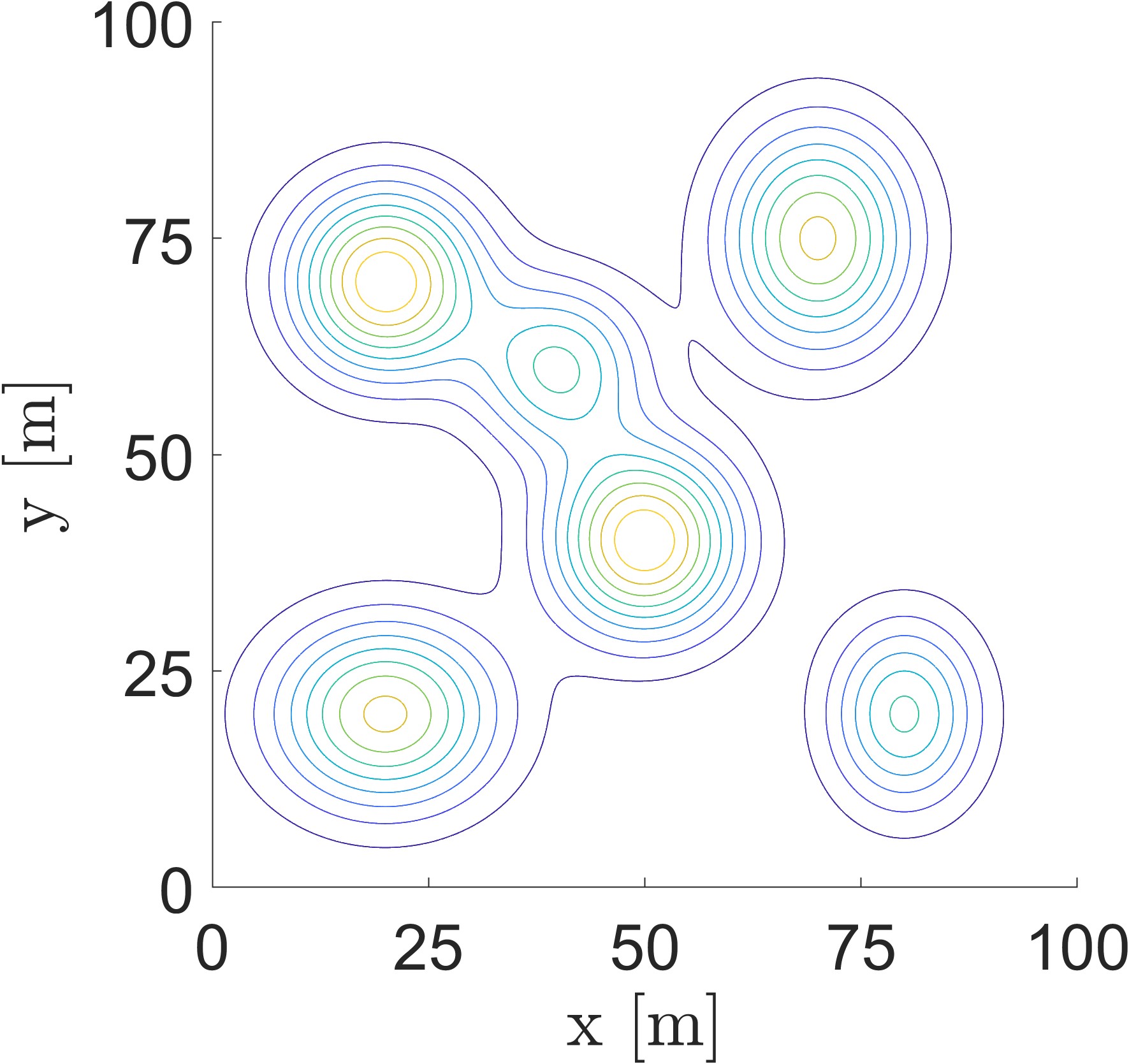}}\quad
    \subfloat[Sample-point representation]{
    \includegraphics[width=0.45\linewidth]{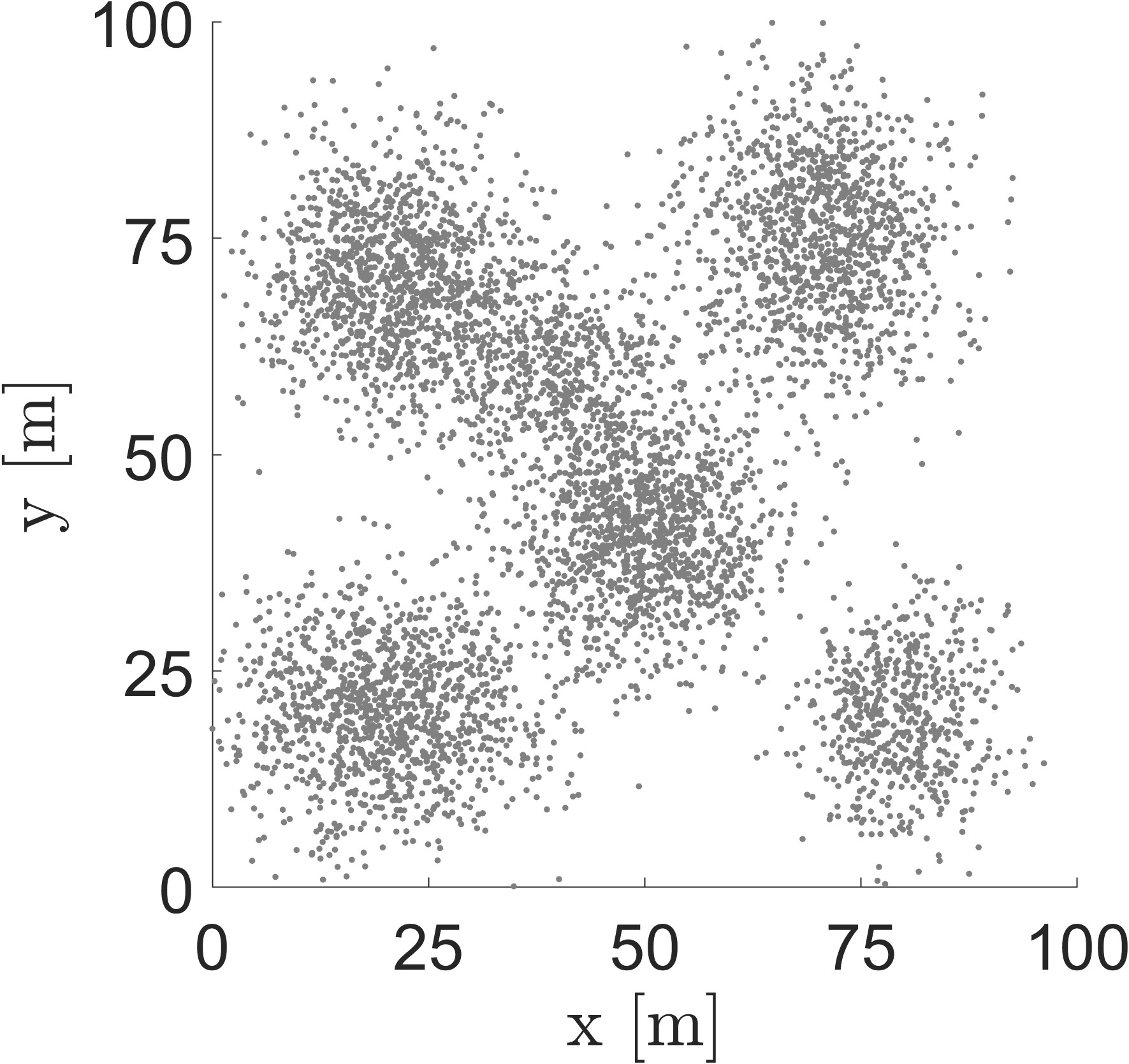}}
    \caption{Reference distribution.}
    \label{fig: refer_dist}
\end{figure}

\subsection{First-order dynamics}
The simulation employed the following first-order dynamics model:
$
    {}^r\xx^{k+1} = {}^r\xx^{k}+{}^ru^{k},$ and $
    {}^r\y^{k} = {}^r\xx^{k},
$
where ${}^r\y^k={}^r\xx^{k} = \begin{bmatrix} {}^rp^k_x&{}^rp^k_y \end{bmatrix}^\top\in \mathbb{R}^2$ represents the Cartesian coordinates in the $xy$-plane at time $k$,  and ${}^ru^k = \begin{bmatrix} {}^ru_x^k&{}^ru_y^k \end{bmatrix}^\top\in \mathbb{R}^2$ denotes the control input. In this dynamics, the matrix product $CB$ equals $I_2$, indicating that the value of $P$ is 1. Table \ref{table: parameter} presents the simulation parameters used for the implementation of the DPC as well as the simulation environment in detail.
Simulations are conducted both with and without control input constraints. In the unconstrained case, the analytical optimal control input \eqref{eqn: opt_cont_input} is used, whereas in the constrained case, the optimal control input is computed using quadratic programming \eqref{eqn: QP_opt_cont_input}. The input constraint is defined as $C_u {}^ru^k \leq D_u$, where $C_u = \begin{bmatrix}
    -I_2&I_2
\end{bmatrix}^\top$ and $D_u = u_\text{max}\begin{bmatrix}
   1 &1&1&1
\end{bmatrix}^\top$. The value of $u_\text{max}$ is provided in Table \ref{table: parameter}. The MATLAB built-in function \texttt{quadprog}, using an interior-point method, is employed to solve \eqref{eqn: QP_opt_cont_input}, with an average computation time of 1.4 ms. In both cases, the communication ranges are set to infinity, i.e., $d_\text{comm.}\rightarrow \infty$.

\begin{table}[!b]
\centering
\caption{Simulation parameters}\label{table: parameter}
\footnotesize
\renewcommand{\arraystretch}{1.25}  
\begin{tabular}{@{}clll@{}}
\hline
\multicolumn{2}{c}{Parameter} & Symbol & Value \\ \hline
\multirow{5}{*}{Agent} & Time step & $\Delta t$ & 0.1 s \\
                     & Agent-points per agent & ${}^rM$ &  \begin{tabular}[c]{@{}l@{}}1500 (1$^{\text{st}}$-order)\\3000 (LTI)\end{tabular} \\
& Number of agents & $L$ & 3 \\
                       & Input constraint & $u_{\max}$ & 5 \\ \hline
\multirow{2}{*}{Env.} & Total sample-points & $N$ & 5975 \\
                      & Domain size & — & 100 m $\times$ 100 m \\ \hline
\end{tabular}
\end{table}

\begin{figure}[h]
    \centering
    \subfloat[Unconstrained case]{
    \includegraphics[width=0.45\linewidth]{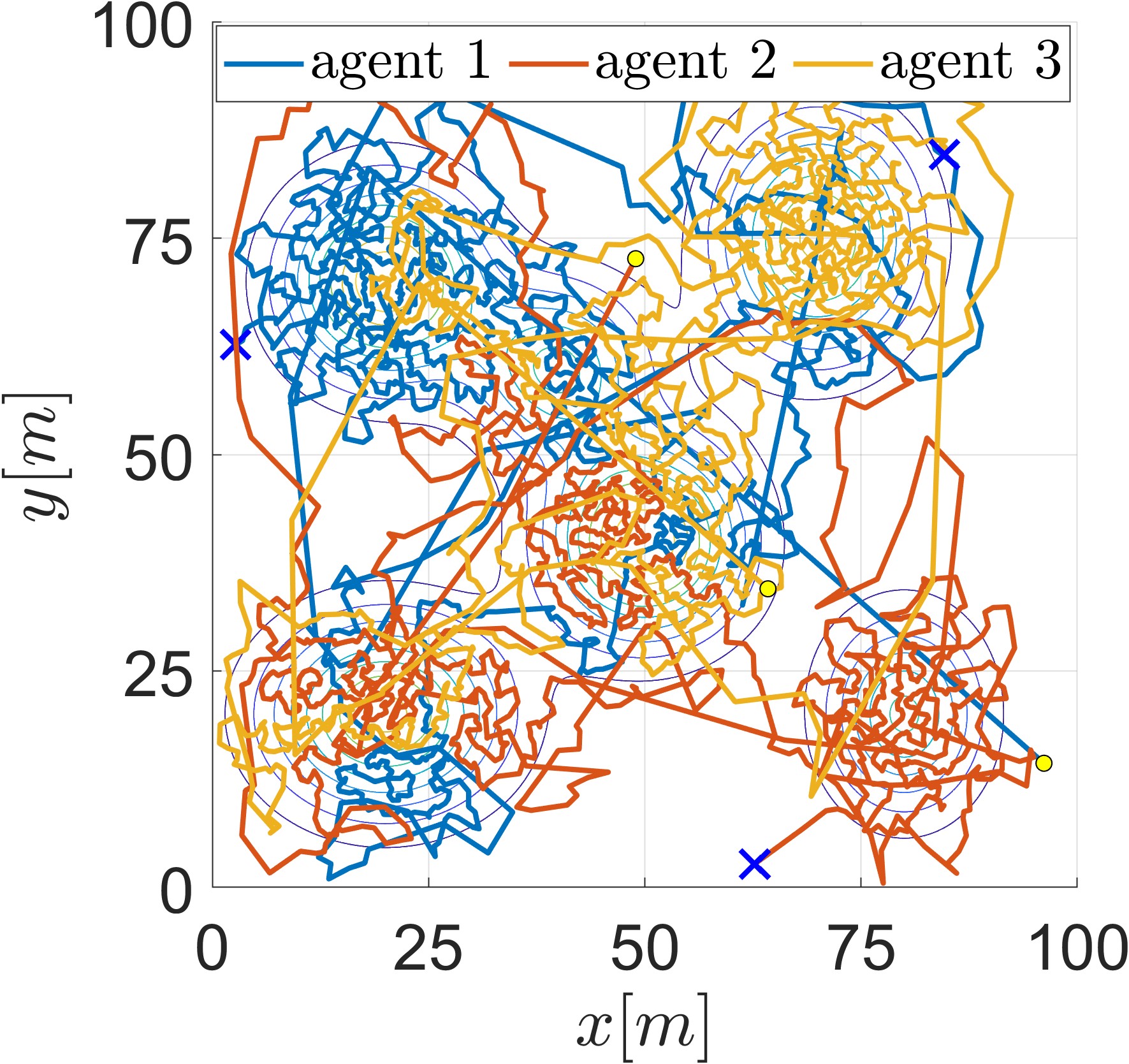}}\quad
    \subfloat[Constrained case]{
    \includegraphics[width=0.45\linewidth]{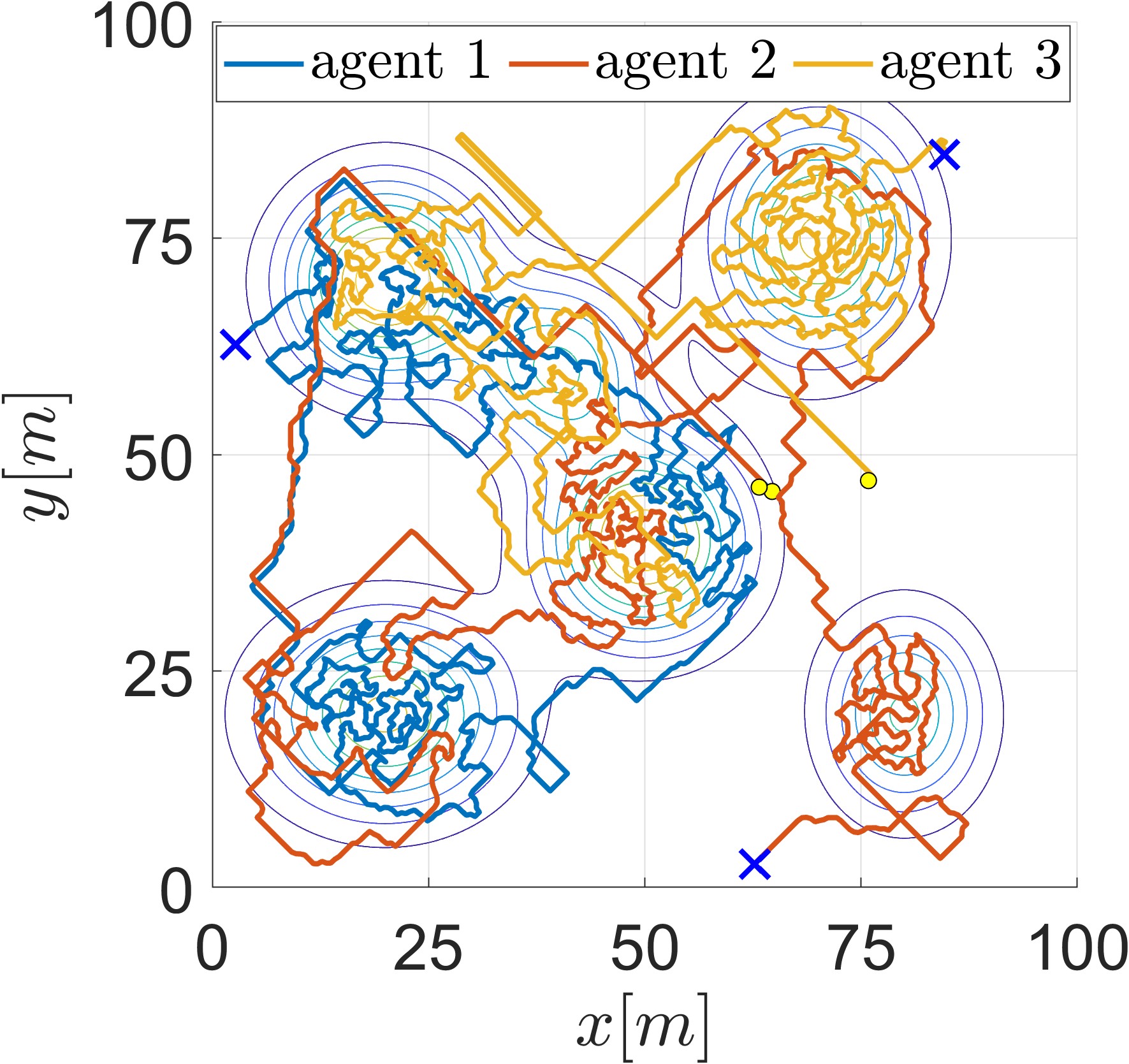}}
    \caption{Agent trajectories generated using the DPC method under first-order dynamics.}
    \label{fig: Traj}
\end{figure}

Figs. \ref{fig: Traj}(a) and (b) depict the agent trajectories for the cases without and with control input constraints, respectively. The blue crosses and yellow circles indicate the agents' initial and terminal locations. Each solid line in red, blue, or yellow represents the trajectory formed by connecting the consecutive agent-points of an agent. In both cases, agent-points are distributed to closely match the sample-point distribution. However, Fig. \ref{fig: Traj}(b) exhibits less thorough exploration than Fig. \ref{fig: Traj}(a) due to the control input constraints limiting the agent's movement.

To validate the convergence behavior of the local Wasserstein distance, $\w^k$ is computed and presented in Fig. \ref{fig: WD_comparison}(a). While convergence is observed over the entire time horizon, variations in $\mathcal{W}^k$ over a selected interval from $k = 821$ to $k = 828$ are highlighted for further illustration. Although the simulation for this figure was conducted using the constrained control input, the optimal control input in \eqref{eqn: opt_cont_input} was also computed at each time step. Using this input, we evaluated how the Wasserstein distance would change if it had been applied. This result is shown as red dashed lines in Fig. \ref{fig: WD_comparison}(a). These inputs were not actually applied during the simulation, so they did not affect the system’s evolution. They are included solely for comparison with the local Wasserstein distance reduction obtained under the constrained control input, which is shown as blue solid lines. 

\begin{figure}
    \centering
    \subfloat[Change in $\w^k$]{
    \includegraphics[width=0.45\linewidth]{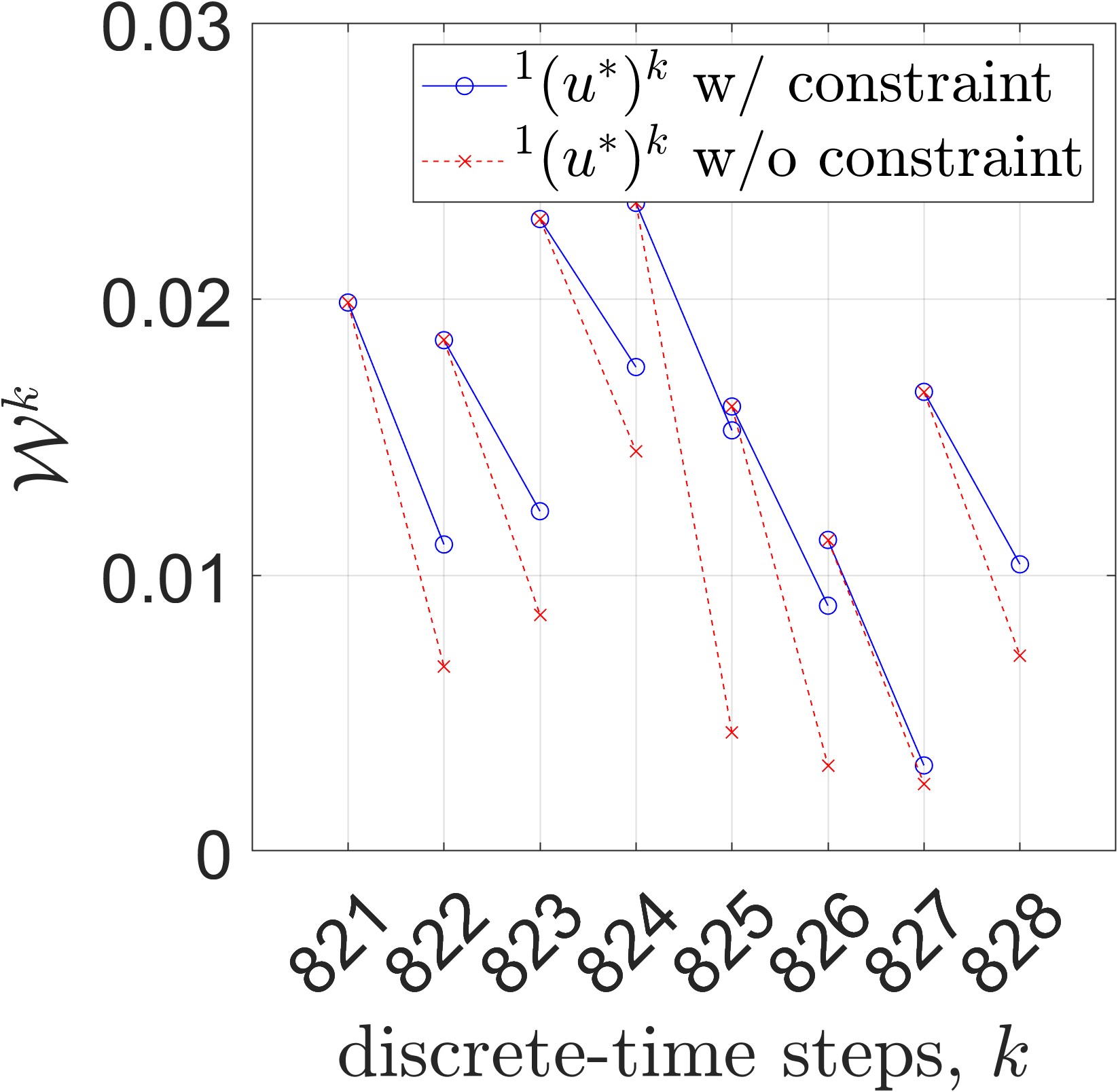}}\quad
    \subfloat[Convergence range of ${}^1u^k$]{
    \includegraphics[width=0.45\linewidth]{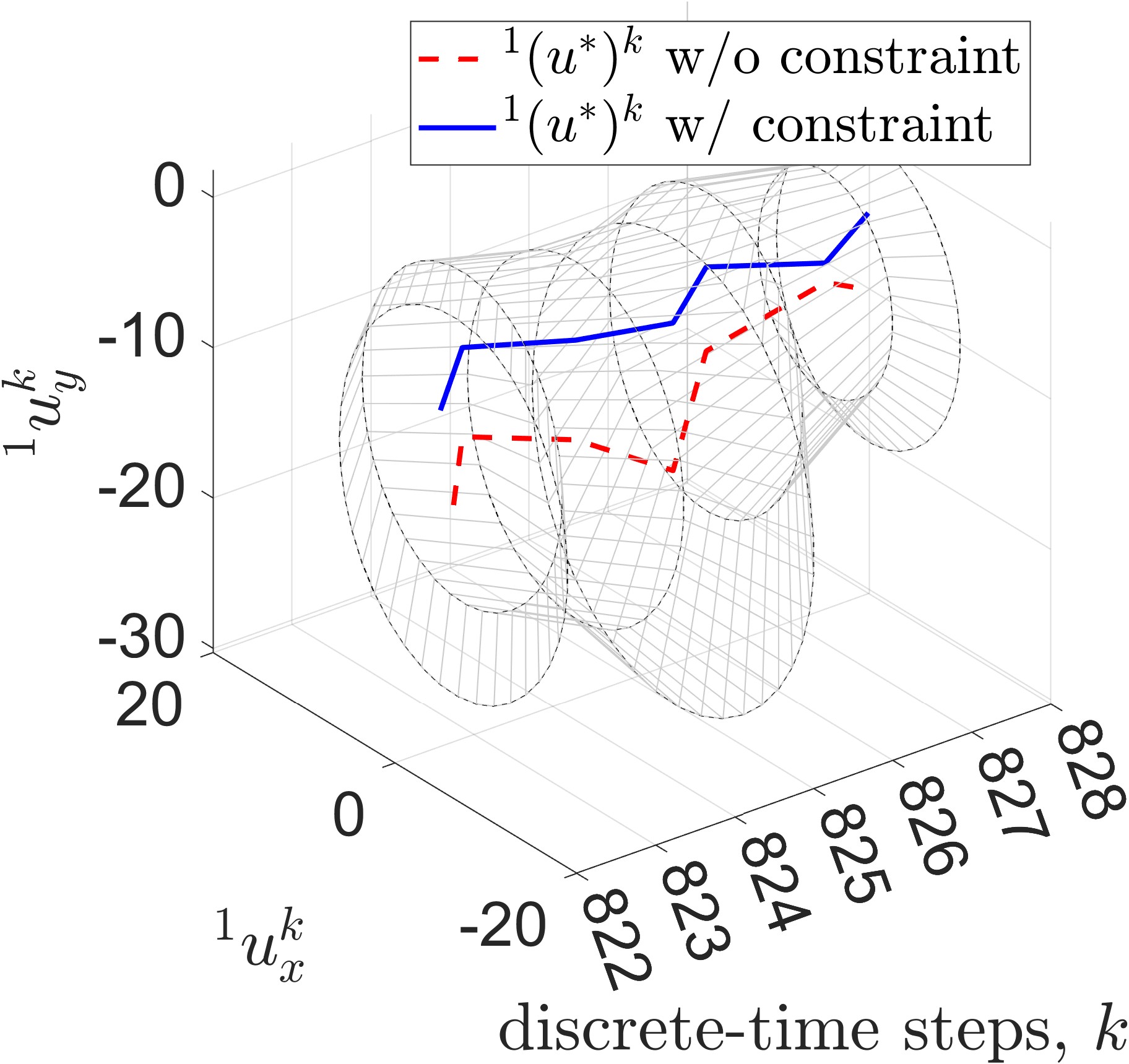}}
    \caption{Change in $\w^k$ and the convergence range of the control input for Agent 1 over discrete-time steps from 821 to 828.}
    \label{fig: WD_comparison}
\end{figure}

Note that Fig. \ref{fig: WD_comparison}(a) depicts the piecewise nature of $\mathcal{W}^k$, which is not continuous between time intervals. This discontinuity arises from the local Wasserstein distance consideration in the proposed DPC. Since the local sample-points, which are used to compute the local Wasserstein distance, change over time steps, $ \mathcal{W}^k$ exhibits a piecewise structure.

Since the feasible set of ${}^ru^k$ under the constraint is a subset of its feasible set without the constraint, the reduction in $\Delta\w^k$ under the constrained case is expected to be less than or equal to that observed in the unconstrained case. Nevertheless, it is noteworthy that the Wasserstein distance for the constrained case still decreases over discrete-time intervals, confirming its convergence.

Fig. \ref{fig: WD_comparison}(b) shows the convergence range of ${}^1u^k$ for Agent 1, obtained from \eqref{eqn: general_conv_range}, over discrete-time steps from $k=821$ to $k=828$. The gray dashed ellipses indicate the convergence range at each time step. The blue solid line represents the optimal control inputs under the input constraint, while the red dashed line represents the unconstrained optimal control inputs, which lie at the centers of the ellipses, where the local Wasserstein distance is minimum. This figure clearly shows that ${}^1({u^*})^k$ with the constraint lies within the convergence range, indicating convergence in the local Wasserstein distance, supporting the main result in Theorem \ref{thm: 1}.
The optimal control input with the constraint may lie outside the convergence range when $\{{}^ru^k\ |\ C_u {}^ru^k \leq D_u,\ {}^ru^k \text{ satisfies equation \eqref{eqn: general_conv_range}}\}=\emptyset$. In this case, the DPC scheme selects the optimal control input that minimizes $\w^k$ within the feasible set of control inputs.

\subsection{Linear Time-Invariant (LTI) system}
The linearized planar drone model \cite{sabatino2015quadrotor} is considered in the simulation to evaluate the performance of the proposed DPC. The state vector is represented by ${}^r\xx^{k} = \begin{bmatrix} {}^r\phi^k&{}^r\theta^k&\odif{^r\phi^k}&\odif{^r\theta^k}&\odif {^rp^k_x}&\odif{^rp^k_y}&{}^rp^k_x&{}^rp^k_y \end{bmatrix}^\top$, the input vector by ${}^ru^k = \begin{bmatrix} {}^r\tau^k_x&{}^r\tau^k_y \end{bmatrix}^\top$, and the output vector by $\begin{bmatrix} {}^rp^k_x&{}^rp^k_y \end{bmatrix}^\top$. Here, $\phi$ and $\theta$ denote the roll and pitch angles, respectively, and $p_x$ and $p_y$ represent the $x$- and $y$-coordinates of the agent, respectively. The symbol $\odif(\cdot)$ in the state vector ${}^r\xx^k$ denotes the variation of the corresponding variable between two consecutive discrete-time steps. The control inputs $\tau_x$ and $\tau_y$ are the torques about the $x$-axis and the $y$-axis, respectively. Since the matrix products $CB=CAB=CA^2B=\mathbf{0}$ while $CA^3B\neq \mathbf{0}$, the value $P$ of this system is 4. To reflect realistic constraints, the bounds are imposed on the state vector and control input of the drone model. Specifically, the state variables were constrained as follows: $-0.52\text{ rad}\leq \phi, \ \theta \leq 0.52\text{ rad},\ -10.47\text{ rad/s}\leq \odif{\phi},\  \odif{\theta} \leq 10.47\text{ rad/s}$, and $-5\text{ m/s}\leq \odif{p_x},\ \odif{p_y}\leq 5\text{ m/s}$. The control inputs were bounded by $-100 \text{\ Nm}\leq \tau_x,\ \tau_y \leq 100 \text{\ Nm}.$ The simulation parameters are summarized in Table \ref{table: parameter}.

\subsubsection{Performance Comparison between DPC and Spectral Multiscale Coverage (SMC) Methods}\label{sec: Sim, SMC vs DPC}
Fig. \ref{fig: LTI_Traj}(a) shows the trajectories of three drones when the DPC scheme is applied under the sample-point distribution shown in Fig. \ref{fig: refer_dist}. The trajectories demonstrate that the distribution of agent-points closely matches the distribution of sample-points, qualitatively indicating that the DPC scheme effectively covers the given reference distribution.

To evaluate the area coverage performance of the proposed DPC method, the Spectral Multiscale Coverage (SMC) method \cite{GM-IM:11} was employed as a baseline for comparison. The SMC method enables non-uniform area coverage by prioritizing regions of greater importance, as specified by a given probability density function (e.g., Fig. \ref{fig: refer_dist}(a)). It approximates both the agents’ trajectories and the reference distribution using a Fourier cosine series, which is used to compute the ergodic metric. A gradient-based algorithm is then employed to minimize this metric, aligning the time-averaged trajectories with the reference distribution.

The simulation was conducted under the same drone model and conditions as those presented in Table \ref{table: parameter} and Fig. \ref{fig: refer_dist}.
While the SMC method supports the first-order and second-order dynamics, it cannot be directly applied to the LTI model. To address this, at each time step, each agent's desired path over the specified horizon is planned using the SMC method with first-order dynamics, and a model predictive control (MPC) is then employed to track this path under the LTI model. The number of basis functions used in the Fourier cosine series is set to 10, and the horizon length for both the SMC-based path planning and the MPC is set to 15 time steps.

\begin{figure}[h]
    \centering
    \subfloat[DPC method]{
    \includegraphics[width=0.45\linewidth]{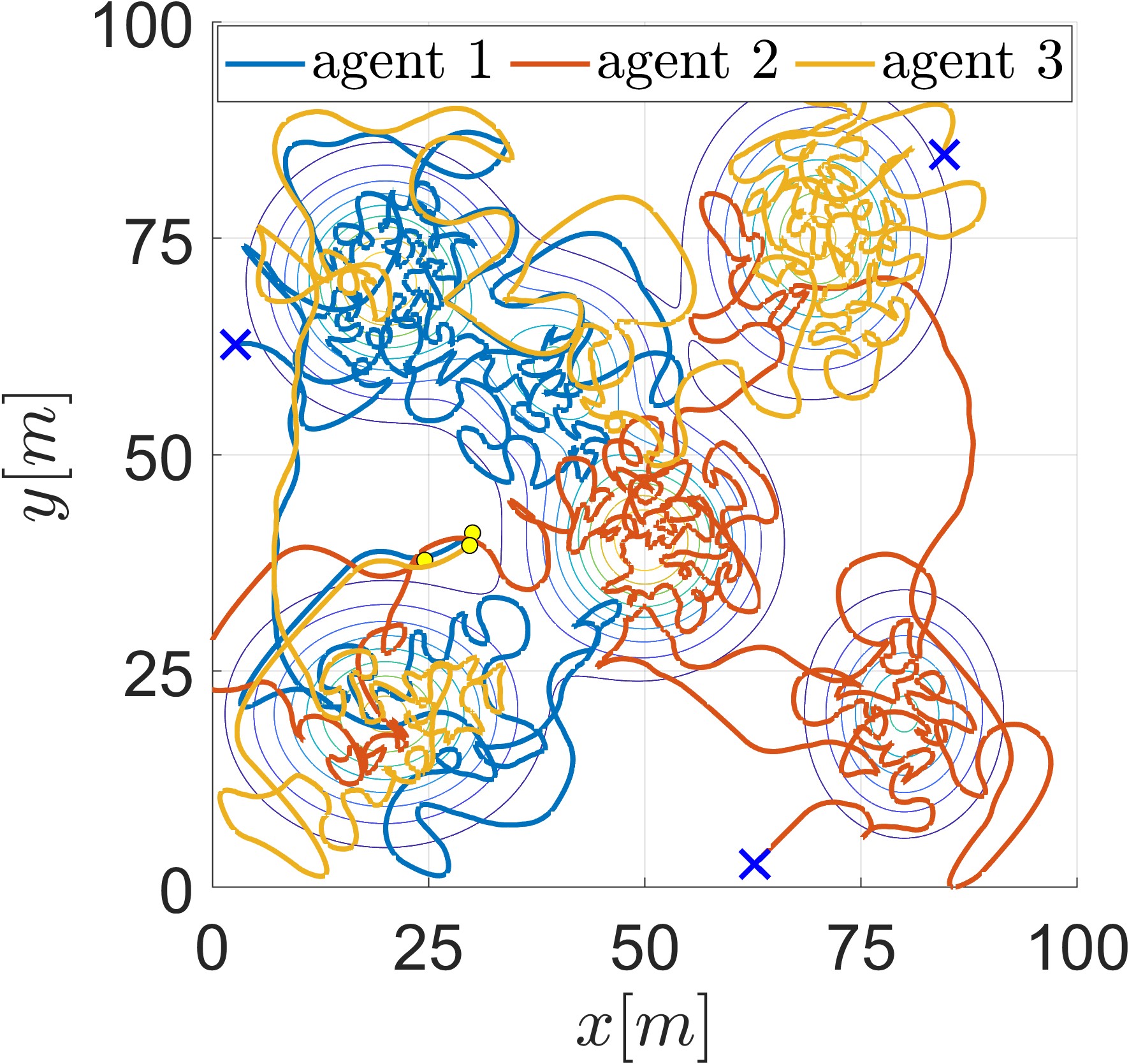}}\quad
    \subfloat[SMC method]{
    \includegraphics[width=0.45\linewidth]{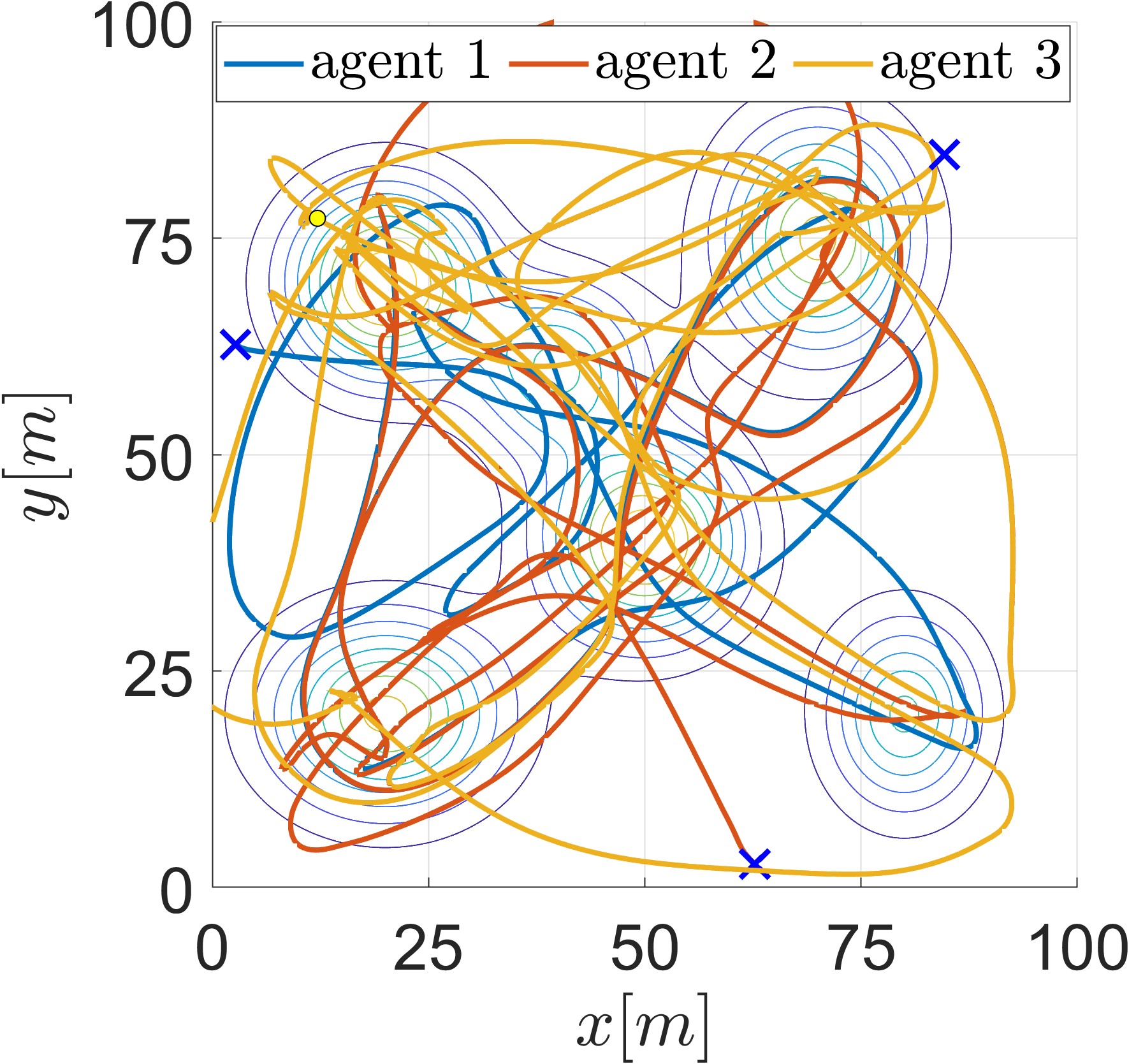}}
    \caption{Agent trajectories generated under the linearized drone system.}
    \label{fig: LTI_Traj}
\end{figure}
\begin{figure}[h]
    \centering
    \subfloat[Change in $\w^k$ of Agent 1 using the DPC method between time steps 822 and 850.]{
    \includegraphics[width=0.45\linewidth]{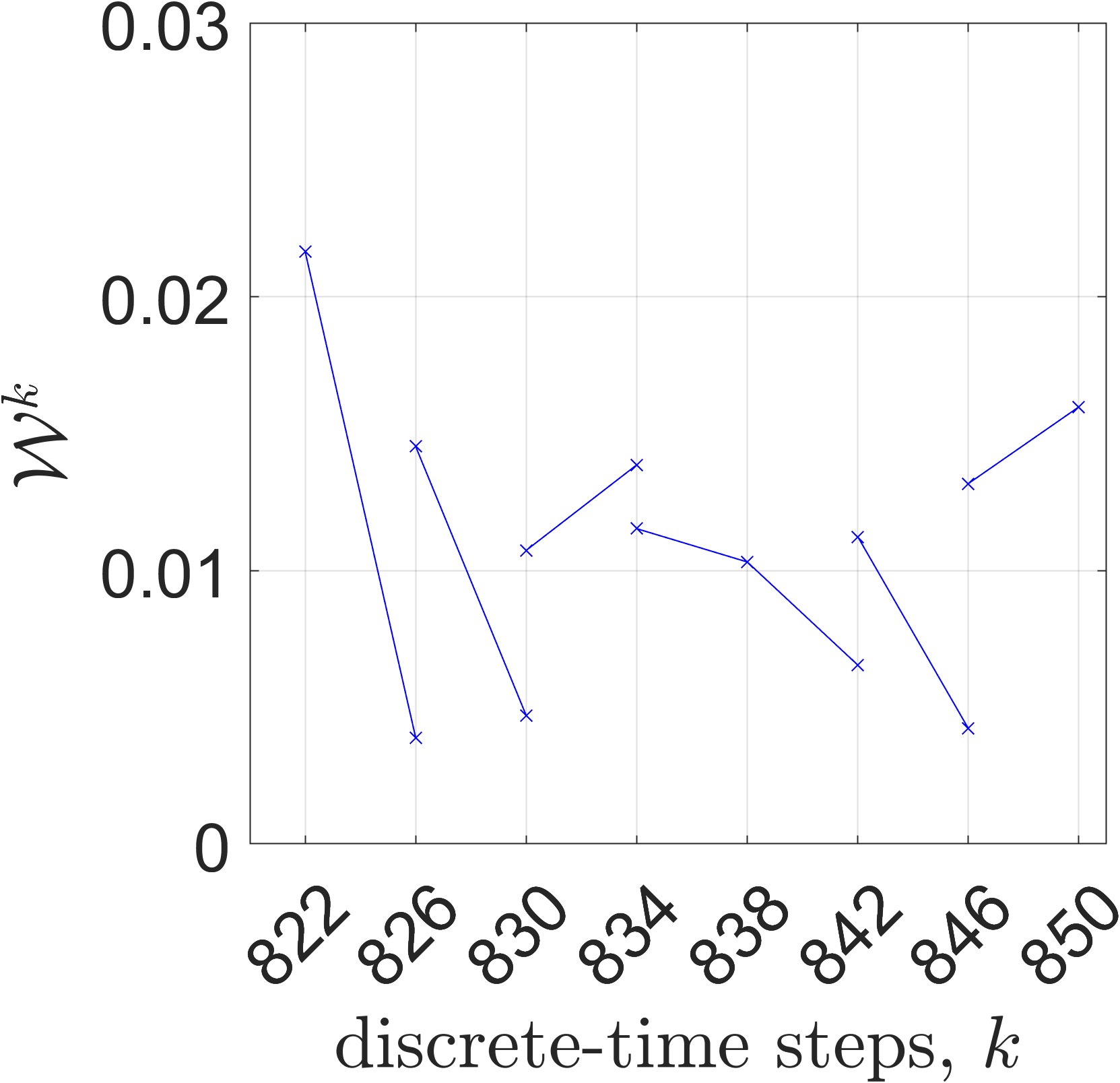}}\quad
    \subfloat[Change in the global Wasserstein distance over the simulation for both the SMC and DPC methods.]{
    \includegraphics[width=0.45\linewidth]{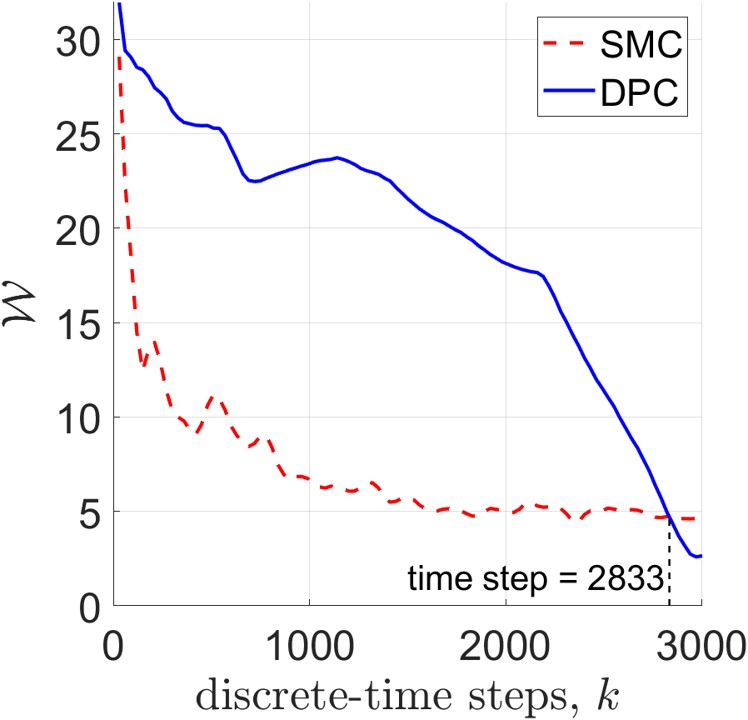}}
    \caption{Change in the local and global Wasserstein distances in the simulation with the linearized multi-drone system.}
    \label{fig: LTI_Wass}
\end{figure}

Fig. \ref{fig: LTI_Traj}(b) illustrates the resulting agent trajectories using the SMC method. The blue crosses and yellow circles indicate the agents' initial and terminal locations. Each solid line in red, blue, or yellow represents the trajectory of an individual agent. It is noteworthy that although the agents are coordinated to align with the reference distribution to some extent, their resulting trajectories appear more dispersed than those of the DPC method in Fig. \ref{fig: LTI_Traj}(a) and exhibit back-and-forth motion between distinct Gaussian distributions. Unlike the DPC framework, which explicitly accounts for each agent’s estimated operation time, the SMC method does not incorporate this constraint. As a result, while the SMC method can eventually lead to improved coverage given sufficient time, there is no formal guarantee on convergence behavior or efficiency, and it may require significantly longer durations to achieve satisfactory performance.

Fig. \ref{fig: LTI_Wass}(a) presents the quantitative measure of $\w^k$ over arbitrary discrete-time intervals from $k=822$ to $k=850$. Since the local Wasserstein distance is designed to converge in $P$ discrete-time steps ($P=4$ for this simulation) under the convergence condition \eqref{eqn: general_conv_range}, $\mathcal{W}^k$ was 
compared to $\mathcal{W}^{k+4}$ in the plot. The figure shows that the local Wasserstein distance generally decreases within each 4-step interval, aligning with the expected convergence behavior. However, occasional increases are observed due to the presence of state constraints, which limit the ability of the control inputs to fully reduce the local distance. These constraints may temporarily hinder the agents from moving directly toward the optimal directions, leading to nonmonotone convergence in some intervals.

Fig. \ref{fig: LTI_Wass}(b) demonstrates the evolution of the Wasserstein distance from a global perspective over the simulation time for both the DPC and SMC methods. As time progresses, the Wasserstein distance exhibits a decreasing trend over time for both methods, indicating that the agents' trajectories are gradually converging toward the reference distribution. Although a theoretical proof for the convergence of the global Wasserstein distance is not provided in the paper, Fig. \ref{fig: LTI_Wass}(b) is included to empirically support its convergence trend in conjunction with the convergence of the local Wasserstein distance. It is worth noting that the global Wasserstein distance for the DPC method exhibits slower convergence at the beginning of the simulation compared to the SMC method. However, the DPC method achieves a lower Wasserstein distance than the SMC method after time step 2833. This trend reflects the prioritization strategy underlying the DPC framework. While the SMC method aims to reduce the ergodicity evaluated over the entire domain at each time step, the DPC method progressively decreases the coverage gap based on the priority of the regions. This causes agents to focus on closer and more critical areas first, improving overall coverage efficiency. As a result of this strategy, one of the Gaussian distributions centered at $x=20, y=20$ is covered last, which leads to a dramatic reduction in the Wasserstein distance after time step 2000. Ultimately, the DPC method achieves better alignment with the reference distribution than the SMC method. At the end of the simulation, the Wasserstein distance for the DPC method is 2.6504, while that for the SMC method is 4.7360, which is approximately 179\% of the DPC value. These results indicate that the DPC method achieves more effective coverage alignment with the reference distribution compared to the SMC method.

\subsubsection{Scalability Performance Evaluation of the DPC Method}
To evaluate the scalability performance of the DPC method, the average stage runtime was measured with a varying number of agents. The same simulation setup as in Section 4.2.1 is used, with the number of agents varied among 1, 2, 4, 8, 16, and 32. For simplicity, we assume the communication is well scheduled with no queueing delay, and the round-trip transmission time is 10 $\pm$ 2 ms. Table \ref{table: comp_time} summarizes the average stage runtime and agent count. As discussed in Remark 6, owing to the fully decentralized framework, Stages A and B do not show a considerable increase in runtime as the number of agents increases. In contrast, Stage C exhibits an approximately linear increase as the number of agents grows, mainly due to communication overhead, which can be mitigated using techniques such as range-limited communication or event-triggered communication strategies.

\begin{table}[t]
\centering
\caption{Average stage runtime (ms) with varying number of agents.}
\label{table: comp_time}

\begin{tabular}{@{}lcccccc@{}}
\hline
 & \multicolumn{6}{c}{\textbf{Number of agents}} \\
\cline{2-7}
 & \textbf{1} & \textbf{2} & \textbf{4} & \textbf{8} & \textbf{16} & \textbf{32} \\
\hline
Stage A & 1.36 & 1.35 & 1.19 & 1.14 & 1.15 & 1.15 \\
Stage B & 0.30 & 0.29 & 0.27 & 0.26 & 0.26 & 0.26 \\
Stage C & 0 & 9.71 & 35.03 & 78.28 & 177.7 & 329.6 \\
\hline
\end{tabular}
\end{table}

\section{Conclusion}
This paper introduces the DPC scheme as a novel approach to the multi-agent non-uniform area coverage problem, leveraging optimal transport theory. Based on a reference distribution, a three-stage control scheme for DPC was developed, consisting of the optimal control, weight update, and weight-sharing stages. The convergence condition of the control input is formulated through the Wasserstein distance, and both analytical and numerical methods for deriving the optimal control law are presented for unconstrained and constrained cases.

Simulations validate the performance of the control input in both unconstrained and constrained scenarios, with the convergence range effectively demonstrating the alignment of the local Wasserstein distance. Additionally, the agent trajectories confirm that DPC successfully drives the agents to cover the reference point distribution, with their paths closely matching the target distribution. The comparison of Wasserstein distances with the SMC method further demonstrates that the DPC method achieves superior coverage performance.
This shows that DPC not only excels in achieving non-uniform area coverage but also offers substantial practical benefits, especially in applications like search and rescue and environmental monitoring, where dynamic, priority-based coverage is essential for operational success.

\begin{acknowledgment}
This work was supported by NSF CAREER Grant CMMI-DCSD-2145810.\end{acknowledgment}

%

\bibliographystyle{asmems4}

\bibliography{reference.bib}

\appendix       
\vspace{-0.05in}
\section*{Appendix A: Proof of Proposition \ref{prop: simplification of Dwk}}\label{proof: prop 1}
In \eqref{eqn: predictive WD_1}, the predictive Wasserstein distance $(\w^{k+i|k})^2$ is reformulated as
{\small
\allowdisplaybreaks
    \begin{align}
    &(\mathcal{W}^{k+i|k})^2=\sum\nolimits_{q_j\in\s^{k}} \bar{\beta}^k_j \norm{\y^{k+i}- q_j}^2 \nonumber
    \\&\overset{(a)}{=}\alpha^k\norm{\y^{k+i}}^2-2(\y^{k+i})^\top\sum_{q_j\in\s^{k}} \bar{\beta}^k_j  q_j + \sum_{q_j\in\s^{k}} \bar{\beta}^k_j\norm{q_j}^2 \nonumber
    \\&\overset{(b)}{=}\alpha^k\norm{\y^{k+i}}^2 -2\alpha^k(\y^{k+i})^\top \bar{q}^k + \sum\nolimits_{q_j\in\s^{k}} \bar{\beta}^k_j\norm{q_j}^2 \nonumber
    \\&=\alpha^k\norm{\y^{k+i}- \bar{q}^k}^2-\alpha^k\norm{ \bar{q}^k}^2+\sum\nolimits_{q_j\in\s^{k}} \bar{\beta}^k_j\norm{q_j}^2,\label{eqn: simp_wass_1}
\end{align}
}
    where, in equality $(a)$, $\sum\nolimits_{j} \bar{\beta}^{k}_j = \alpha^{k}$ is used, and in equality $(b)$, \eqref{eqn: mass_center_LSP} is used with $k$ instead of $k-1$.

The second and third terms on the right-hand side of \eqref{eqn: simp_wass_1} are simplified by the following equation.
{\small
\begin{align}
    \sum\nolimits_{q_j\in\s^{k}} \bar{\beta}^k_j \norm{q_j -  \bar{q}^k}^2
    = \sum\nolimits_{q_j\in\s^{k}} \bar{\beta}^k_j\norm{q_j}^2-\alpha^k\norm{\bar{q}^k}^2. \label{eqn: simp_centroid}
\end{align}
}
By substituting \eqref{eqn: simp_centroid}, \eqref{eqn: simp_wass_1} is reorganized as
{\small\begin{align}
    (\mathcal{W}^{k+i|k})^2 =\alpha^k\norm{\y^{k+i}- \bar{q}^k}^2+\sum\nolimits_{q_j\in\s^{k}} \bar{\beta}^k_j \norm{q_j -  \bar{q}^k}^2.  \label{eqn: reform_predictive WD}
\end{align}}
    
    Using \eqref{eqn: reform_predictive WD}, \eqref{eqn: diff_Wass} is then rearranged by
    {\small \allowdisplaybreaks
    \begin{align}
    \Delta \w^k&= \sum_{q_j\in \s^{k}} \bar{\beta}^k_j \norm{\y^{k+P}-q_j} ^2 - \sum_{q_j\in\s^{k}} \bar{\beta}^k_j \norm{\y^{k}-q_j}^2\nonumber
    \\&= \alpha^k\norm{\y^{k+P}- \bar{q}^k}^2+\sum\nolimits_{q_j\in\s^{k}} \bar{\beta}^k_j \norm{q_j -  \bar{q}^k}^2\nonumber
    \\&\qquad-\{\alpha^k\norm{\y^{k}- \bar{q}^k}^2+\sum\nolimits_{q_j\in\s^{k}} \bar{\beta}^k_j \norm{q_j -  \bar{q}^k}^2\}\nonumber
    \\&=\alpha^k\left(\norm{\y^{k+P}- \bar{q}^k}^2-\norm{\y^{k}-\bar{q}^k}^2\right). \nonumber
\end{align}\qed}
\vspace{-0.1in}
\section*{Appendix B: Proof of Theorem \ref{thm: 1}}\label{proof: thm 1}
The optimal control input is the control input that minimizes \eqref{eqn: diff_Wass_2.2}, which can be alternatively rewritten with some constant $\D_4\in\mathbb{R}$ by
    {\small\begin{align}
    &\norm{u^k-u^k_o}^2_{\D_1}+\D_4 \label{eqn: diff_Wass_4}
    \\&\quad = (u^k)^\top \D_1 u^k - 2(u_o^k)^\top \D_1 u^k + (u_o^k)^\top \D_1 u_o^k + \D_4 \nonumber
\end{align}}
where $u^k_o\in\mathbb{R}^m$ is a constant vector. 
For the equivalence between \eqref{eqn: diff_Wass_2.2} and \eqref{eqn: diff_Wass_4}, we have
{\small \begin{align}
    &\D_2^\top=-\D_1 u_o^k ,\ \D_4 = -(u_o^k)^\top \D_1 u_o^k +\D_3,\label{eqn: quad_coeff1} 
\end{align}}where $\D_1$ in \eqref{eqn: quad_coeff1} replaces $\D_1^\top$ due to its symmetry. 

\textbf{Case 1 (Unique solution). the rank of $\mathbf{CA^{P-1}B}$ is $\mathbf{p=m}$.} In this case, $\D_1\in\mathbb{R}^{m\times m}$ is invertible, and ${u_o}^k$ in \eqref{eqn: diff_Wass_4} is obtained from \eqref{eqn: quad_coeff1} as follows:
{\small\begin{align}
    u^k_o = -\D_1^{-1}\D_2^\top.\label{eqn: u^k_o}
\end{align} }

Since the matrix $CA^{P-1}B$ is real and has full column rank, the product $(CA^{P-1}B)^\top(CA^{P-1}B)$ is positive definite. Specifically, for all $x \ne \mathbf{0}$, we have
$
x^\top (CA^{P-1}B)^\top (CA^{P-1}B) x = \| CA^{P-1}B x \|^2 > 0,
$
which confirms that $D_1$ is positive definite. Consequently, the optimal control input $(u^*)^k$ that minimizes \eqref{eqn: diff_Wass_4} is uniquely given by $(u^*)^k = u^k_o=-\D_1^{-1}\D_2^\top.$
The convergence range is defined as the region where $\Delta\w^k$ is negative, given by
{\small\begin{align*}
    &\Delta \w^k = \norm{u^k-u^k_o}^2_{\D_1}-(u_o^k)^\top \D_1 u_o^k +\D_3
    \\&\phantom{\Delta \w^k}   =\norm{u^k+\D_1^{-1}\D_2^\top}^2_{\D_1}-\D_2\D_1^{-1}\D_2^\top +\D_3<0
    \\&\rightarrow \norm{u^k+\D_1^{-1}\D_2^\top}^2_{\D_1}<\D_2\D_1^{-1}\D_2^\top -\D_3.
\end{align*}}
\textbf{Case 2 (Non-unique solution). the rank of $\mathbf{CA^{P-1}B}$ is $\mathbf{p<m}$.} In this case, the matrix $\D_1$ in \eqref{eqn: quad_coeff1} is not invertible and hence, $u^k_o$ cannot be computed as in \eqref{eqn: u^k_o}. By substituting $\D_1$ and $\D_2$ from \eqref{eqn: D1 to D3} into \eqref{eqn: quad_coeff1}, the equation is given by $-(CA^{P-1}B)^\top CA^{P-1}Bu^k_o = (CA^{P-1}B)^\top\{\xx^kA^PC - \bar{q}^k\}$. 

Since both sides of this equation lie in the same \( p \)-dimensional subspace \( V \), defined as the subspace generated by the column vectors of the matrix \( (CA^{P-1}B)^\top \), and given that  
$
\text{rank}(CA^{P-1}B) = \text{rank}((CA^{P-1}B)^\top C A^{P-1}B) = \text{rank}(\D_1) = p,
$  
it follows that the column vectors of \( \D_1 \) span \( V \). Consequently, the solution \( u^k_o \) to \eqref{eqn: quad_coeff1} must exist. The solution to \eqref{eqn: quad_coeff1} is given by the Moore-Penrose inverse as
{\small \begin{align}
    u^k_o = -\D_1^+\D_2^\top + (I_m-\D_1^+\D_1)h,\label{eqn: u^k_o_case_2}
\end{align} }
where $h\in\mathbb{R}^m$ is an arbitrary vector, making $u^k_o$ a set-valued solution. 
Knowing the Moore-Penrose inverse properties that $AA^+A = A$ and $AA^+b=b$ for a vector $b$ that  belongs to the column space of $A$, substituting $u^k_o$ into \eqref{eqn: quad_coeff1} for verification leads to $-\D_1 u_o^k = +\D_1\D_1^+\D_2^\top - (\D_1-\D_1\D_1^+\D_1)h = \D_2^\top - (\D_1-\D_1)h = \D_2^\top$.
Therefore, the optimal control input is given by 
    $(u^*)^k =-\D_1^+\D_2^\top + (I_m-\D_1^+\D_1)h$.
Similar to Case 1, the convergence range is given by
$
    \Delta \w^k = \norm{u^k-u^k_o}^2_{\D_1}-(u_o^k)^\top \D_1 u_o^k +\D_3<0,
$
where, by substituting \eqref{eqn: u^k_o_case_2}, the terms $\norm{u^k-u^k_o}^2_{\D_1}$ and $(u^k_o)^\top \D_1 u^k_o$ are rewritten as
{\small\begin{flalign*}
    &\norm{u^k-u^k_o}^2_{\D_1}=
    \norm{u^k+\D_1^+\D_2^\top - (I_m-\D_1^+\D_1)h}^2_{\D_1}
    \\&=\norm{u^k+\D_1^+\D_2^\top}^2_{\D_1}+2(u^k+\D_1^+\D_2^\top)^\top(\D_1-\D_1\D_1^+\D_1)h
    \\&+((I_m-\D_1^+\D_1)h)^\top(\D_1-\D_1\D_1^+\D_1)h=\norm{u^k+\D_1^+\D_2^\top}^2_{\D_1},
    \\&(u^k_o)^\top \D_1 u^k_o =h^\top (I_m - \D_1 \D_1^+) \D_1 (I_m - \D_1 \D_1^+) h&&\nonumber
    \\&- 2h^\top (I_m - \D_1 \D_1^+) \D_1 \D_1^+ \D_2^\top + \D_2 \D_1^+ \D_1 \D_1^+ \D_2^\top
    \\&=\D_2 \D_1^+ \D_2^\top. &&\label{eqn: quad_coeff2_term}
\end{flalign*}}
Finally, the convergence range is simplified as
{\small\begin{flalign*}
    &\Delta \w^k 
    =\norm{u^k+\D_1^+\D_2^\top}^2_{\D_1}-\D_2 \D_1^+ \D_2^\top+\D_3 < 0&&
    \\&\rightarrow \norm{u^k+\D_1^+\D_2^\top}^2_{\D_1} < \D_2 \D_1^+ \D_2^\top-\D_3.&&
\end{flalign*}}
It is noteworthy that the optimal control input and the convergence range for Case 1 are special cases of those for Case 2. Specifically, if the matrix $\D_1$ is invertible, then $\D_1^+$ equals $\D_1^{-1}$. Substituting $\D_1^+$ with $\D_1^{-1}$ in \eqref{eqn: opt_cont_input} and \eqref{eqn: general_conv_range} yields equations that are identical to \eqref{eqn: opt_cont_input_unique} and \eqref{eqn: conv_range}. Consequently, \eqref{eqn: opt_cont_input} and \eqref{eqn: general_conv_range} represent the optimal solutions and the convergence range, respectively, for the general case $p\leq m$. \qed

\end{document}